\newcommand{\R}{\mathbb{R}}                         
\newcommand{\E}[1]{\mathbb{E}\left[#1\right]}       
\renewcommand{\P}{\mathbb{P}}                       
\newcommand{\sigalg}[1]{\mathscr{#1}}               
\newcommand{\F}{\sigalg{F}}                         
\newcommand{\triple}{(\Omega,\F,\P)}                
\newcommand{\ind}[1]{\mathbb{I}_{\left\{#1\right\}}}
\newcommand{\tr}{t\in[0,T]}                         
\newcommand{\filt}{(\F_t)_{\tr}}                    
\newcommand{\basis}{(\Omega,\F,\filt,\P)}           
\renewcommand{\rm}[1]{r^{#1-}}          
\newcommand{\Rm}[2]{r_{#1}^{#2-}}
\newcommand{\Rp}[2]{r_{#1}^{#2+}}
\newcommand{\Rpm}[2]{r_{#1}^{#2\pm}}
\newcommand{\affm}[2]{m_{#1}^{#2}}
\newcommand{\affc}[2]{c_{#1}^{#2}}
\newcommand{\Xq}{\widehat{X}}           
\newcommand{\Xd}{\widetilde{X}}         
\newcommand{\Yd}{\widetilde{Y}}			
\newcommand{\Yq}{\widehat{Y}}           
\newcommand{\Delt}{\Delta t}            
\newcommand{\U}{\mathcal{U}}
\newcommand{\Ubar}{\overline{\mathcal{U}}} 
\newcommand{\mbar}{\overline{m}}
\newcommand{\cbar}{\overline{c}}
\newcommand{\mP}{\mathbf{P}}
\newcommand{\mJnt}{\mathbf{J}}
\newcommand{\mM}{\mathbf{M}}
\newcommand{\mf}{\mathbf{f}}
\newcommand{\mm}{\mathbf{m}}
\newcommand{\mc}{\mathbf{c}}
\newcommand{\mJ}{\mathbf{1}}
\renewcommand{\mp}{\mathbf{p}}
\renewcommand{\mho}{\mathbf{h}_{\mathsf{off}}}
\newcommand{\mhm}{\mathbf{h}_{\mathsf{main}}}
\newcommand{\hprod}{\circ}
\newcommand{\mGamma}{\mathbf{\Gamma}}
\providecommand{\abs}[1]{\lvert#1\rvert}
\newcommand{\sgn}{\,\mathrm{sgn}}
\theoremstyle{plain}
\newtheorem{thm}{Theorem}[section]
\newtheorem{prop}[thm]{Proposition}
\newtheorem{rem}[thm]{Remark}
\title{Fast Quantization of Stochastic Volatility Models}
\author[1]{Ralph Rudd}
\author[1,2]{Thomas A. McWalter\thanks{Correspondence: tom@analytical.co.za}}
\author[1,3]{J\"{o}rg Kienitz}
\author[1,4]{Eckhard Platen}
\affil[1]{Department of Actuarial Science and the African Collaboration for Quantitative Finance and Risk Research, University of Cape Town}
\affil[2]{Department of Finance and Investment Management, University of Johannesburg}
\affil[3]{Fachbereich Mathematik und Naturwissenschaften,
Bergische Universit\"{a}t Wuppertal}
\affil[4]{Finance Discipline Group and School of Mathematical and Physical Sciences, University of Technology Sydney}
\begin{document}

\maketitle

\begin{abstract}
Recursive Marginal Quantization (RMQ) allows fast approximation of solutions to stochastic differential equations in one-dimension. When applied to two factor models, RMQ is inefficient due to the fact that the optimization problem is usually performed using stochastic methods, e.g., Lloyd's algorithm or Competitive Learning Vector Quantization. In this paper, a new algorithm is proposed that allows RMQ to be applied to two-factor stochastic volatility models, which retains the efficiency of gradient-descent techniques. By margining over potential realizations of the volatility process, a significant decrease in computational effort is achieved when compared to current quantization methods. Additionally, techniques for modelling the correct zero-boundary behaviour are used to allow the new algorithm to be applied to cases where the previous methods would fail. The proposed technique is illustrated for European options on the Heston and Stein-Stein models, while a more thorough application is considered in the case of the popular SABR model, where various exotic options are also priced.
\end{abstract}

\section{Introduction}
\label{Sec: Introduction}
Quantization is a lossy compression technique that has been applied to many challenging problems in mathematical finance, including pricing options with path dependence and early exercise \citep{PagesWilbertz2012, sagna2010pricing, bormetti2016backward}, stochastic control problems \citep{PagesPhamPrintems2004} and non-linear filtering \citep{PagesPham2005}.

\cite{pages2015recursive} introduced a technique known as Recursive Marginal Quantization (RMQ), which approximates the marginal distribution of a stochastic differential equation by recursively quantizing the Euler approximation of the process. This was extended to higher-order schemes by \cite{mcwalter2017recursive}. RMQ can be applied to any one-dimensional SDE, even when the transition density is unknown, and has been used to efficiently calibrate a local volatility model by \cite{callegaro2014pricing,Callegaroetal2015a}.

Applying the standard RMQ technique to a two-factor SDE generally requires the use of stochastic numerical methods, such as the randomized Lloyd's method or stochastic gradient descent methods such as Competitive Learning Vector Quantization (see \cite{pagesintroduction} for an overview of these methods). The computational cost of these techniques is prohibitive.

To overcome this numerical inefficiency, \cite{callegaro2015pricing} used conditioning to derive a modified RMQ algorithm that can be applied to stochastic volatility models while retaining the use of the underlying Newton-Raphson technique. This was achieved by performing a one-dimensional RMQ on the volatility process and then conditioning on the realizations of the resultant quantized process. We derive a new RMQ algorithm for the stochastic volatility setting by showing that the correlation between the two processes may be neglected when minimizing the distortion. We call this innovation the Joint Recursive Marginal Quantization (JRMQ) algorithm. It results in an increase in accuracy and a large increase in efficiency. Furthermore, it allows for the modelling of the correct zero-boundary behaviour of the underlying processes. We now provide an overview of the paper.

In Section \ref{Sec: Single-factor} an overview of the RMQ algorithm in the one-dimensional case is provided. Section \ref{Sec: Stochastic Volatility} derives the JRMQ algorithm for the stochastic volatility setting with the main result of the paper contained in Proposition \ref{Thm: Dimension Reduction}. Section \ref{Sec: Joint Probability} discusses how to efficiently compute the joint probabilities required by the new algorithm.
In Section \ref{Sec: Implementation}, a concise matrix formulation is provided to ease implementation. Section \ref{Sec: Pricing European Options} prices European options under the Stein-Stein, Heston and SABR stochastic volatility models. In Section \ref{Sec: Pricing Exotic Options}, a single grid generated by the JRMQ algorithm for the SABR model is used to price Bermudan and barrier options, and volatility corridor swaps. Section \ref{Sec: Conclusion} concludes.

\section{Quantization of Single-factor Models}
\label{Sec: Single-factor}
Let $X$ be a continuous random variable, taking values in $\R$, and defined on the probability space $\triple$. We seek an approximation of this random variable, denoted $\Xq$, taking values in a set of finite cardinality, $\Gamma^x$, with the minimum expected squared Euclidean difference from the original. Constructing this approximation is known as vector quantization, with $\Xq$ called the \emph{quantized} version of $X$ and the set $\Gamma^x = \{x^1, \dots, x^{N}\}$ known as the \emph{quantizer} with cardinality $N$. The elements of $\Gamma^x$ are called \emph{codewords}.

The primary utility of quantization is the efficient approximation of expectations of functionals of the random variable $X$ using
\[
    \E{H(X)}=\int_{\R}H(x)\,d\P(X\leq x)\approx\sum_{i=1}^N H(x^i)\P\big(\widehat{X}=x^i\big), 
\]
where $\Xq$ denotes the quantized version of $X$.
We now briefly describe the mathematics of vector quantization. Consider the nearest-neighbour projection operator, $\pi_{\Gamma^x}:\R\mapsto\Gamma^x$, given by
\begin{align*}
    \pi_{\Gamma^x}(X):=\big\{x^{i}\in\Gamma^x\,\big|\,\,\|X-x^{i}\|\leq \|X-x^{j}\|\text{ for all $j=1,\ldots,N;$ }&\text{where equality}\\
    &\text{holds only for $i<j$}\big\}.
\end{align*}
The quantized version of $X$ is defined in terms of this projection operator as $\Xq := \pi_{\Gamma^x}(X)$. The \emph{region} $R^i(\Gamma^x)$, for $1\leq i \leq N$, is defined as
\[
    R^i(\Gamma^x):=\big\{z\in\R\,\big|\,\,\pi_{\Gamma^x}(z)=x^{i}\big\},
\]
and is the subset of $\R$ mapped to codeword $x^i$ through the projection operator.

The expected squared Euclidean error, known as the \emph{distortion}, is given by
\begin{align*}
    D(\Gamma^x)&=\E{\|X-\widehat{X}\|^2}\notag\\
    &=\int_{\R}\|x-\pi_{\Gamma^x}(x)\|^2\,d\P(X\leq x)\notag\\
    &=\sum_{i=1}^N\int_{R_i(\Gamma^x)}\|x-x^{i}\|^2\,d\P(X\leq x),
\end{align*}
and is the function that must be minimized in order to obtain the optimal quantizer.
We retain the symbol $x$ to refer to the continuous domain of the distribution of the random variable $X$, whereas $x^i$ refers to the discrete codewords of the resulting quantizer, $\Gamma^x$, for $1\leq i \leq N$.

When the gradient and Hessian of the distortion can be computed in closed-form, a simple Newton-Raphson algorithm may be used to minimise the distortion,
\[
    \prescript{(n+1)}{}{\mGamma}^x=\prescript{(n)}{}{\mGamma}^x-\left[\nabla^2D\left(\prescript{(n)}{}{\mGamma}^x\right)\right]^{-1}\nabla D\left(\prescript{(n)}{}{\mGamma}^x\right).
\]
Here, $0\leq n<n_{\mathrm{max}}$ is the iteration index of the algorithm and $[ \prescript{(n)}{}{\mGamma}^x]_i = x^i$, for $1\leq i \leq N$, is a column vector containing the codewords. The gradient vector and Hessian matrix of the distortion are $\nabla D\left(\prescript{(n)}{}{\mGamma}^x\right)$ and $\nabla^2D\left(\prescript{(n)}{}{\mGamma}^x\right)$, respectively. Note that the distortion function is applied element-wise to the column vector $\prescript{(n)}{}{\mGamma}^x$, and $\prescript{(0)}{}{\mGamma}^x$ is an initial guess for the quantizer. \cite{mcwalter2017recursive} provide explicit expressions for the gradient vector and Hessian matrix in the one-dimensional case, and an efficient matrix formulation for implementation.

To extend the applicability of vector quantization for use with SDEs, \cite{pages2015recursive} proposed recursive marginal quantization of the Euler scheme for an SDE. In order to fix the notation used in the remainder of the paper we briefly specify this problem.

Consider the one-dimensional continuous-time stochastic differential equation
\begin{align}
dX_t &= a^x(X_t)\,dt + b^x(X_t)\, dW^x_t, \qquad X_0 = x_0,\notag
\intertext{defined on $\basis$, a filtered probability space satisfying the usual conditions. The discrete-time Euler approximation $\Xd$ of $X$, on an evenly spaced time grid, is given by}
\Xd_{k+1}&=\Xd_{k}+a^x(\Xd_k)\Delt +b^x(\Xd_k)\sqrt{\Delt}Z^x_{k+1} \notag \\
&=:\U^x(\Xd_{k},Z^x_{k+1}), \label{Eq: X-process Euler Approximation}
\end{align}
for $0\leq k<K$, where $\Delt = T/K$, and $Z^x_{k+1}\sim\mathscr{N}(0,1)$ are independent standard Gaussian random variables.

The optimal quantizer for the continuous-time process $X$, at each fixed time ${t_{k+1}} = (k+1)\Delt$, should be computed using the distortion
\[ \E{\|X_{t_{k+1}}-\pi_{\Gamma^x}(X_{t_{k+1}})\|^2}. \]
This is, however, not possible in the general case, since the distribution of $X_{t_{k+1}}$ is unknown. We instead consider the distortion computed in terms of the Euler approximation $\Xd_{k+1}$.

Let $\Gamma^x_{k}$ be the quantizer of $\Xd$, at time-step $k$ with $0\leq k\leq K$. To remain consistent with the specification of the problem above, the quantizer at the initial time, $t_0$, is given by $\Gamma^x_0=\{x_0\}$. We fix the cardinality of the quantizers at all other time steps to be $N^x$ --- this may, however, be relaxed (see, for example, the discussion on optimal dispatching in \cite{pages2015recursive}). Since the Euler update is normally distributed, the quantizer at the first time step is just the vector quantization of a normal distribution.  The distortion of the quantizer for each successive step is then given by
\begin{align*}
    \widetilde{D}\big(\Gamma^x_{k+1}\big)&=\E{\big\|\Xd_{k+1}-\pi_{\Gamma^x}(\Xd_{k+1})\big\|^2}\notag\\
    &=\E{\E{\left.\big\|\Xd_{k+1}-\Xq_{k+1}\big\|^2\,\right|\,\Xd_k}}\notag\\
    &=\E{\E{\left.\big\|\U^x(\Xd_k,Z^x_{k+1})-\Xq_{k+1}\big\|^2\,\right|\,\Xd_k}}\notag\\
    &=\int_{\R}\E{\big\|\U^x(x,Z^x_{k+1})-\Xq_{k+1}\big\|^2}\!d\P(\Xd_k\leq x).
\end{align*}
To proceed, we approximate the above distortion using the distribution of $\Xq_k$ rather than $\Xd_k$, in which case
\[
    \widetilde{D}\big(\Gamma^x_{k+1}\big)\approx D\big(\Gamma^x_{k+1}\big):=\sum_{i=1}^{N^x}\E{\big\|\U^x(x_k^i,Z_{k+1})-\Xq_{k+1}\big\|^2}\P\big(\Xq_k=x_{k}^{i}\big),
\]
where the approximate distortion is defined without any accents.

This is the one dimensional vector quantization problem, where the distribution being quantized is a marginal distribution consisting of the probability-weighted sum of Euler updates, each having originated from the codewords in the quantizer at the previous time step. Recursively applying this  procedure to the updates of the Euler process is known as recursive marginal quantization (RMQ). Since the vector quantization problem specified in this manner is one-dimensional, the efficient Newton-Raphson procedure can be used to minimize the resulting distortion, which yields the quantizer at each time-step.
\cite{mcwalter2017recursive} derive explicit and efficient expressions for the gradient vector and Hessian matrix required for the Newton-Raphson procedure and show that recursive marginal quantization of higher-order schemes is possible --- specifically the Milstein and simplified weak-order 2.0 schemes. In the present work, we only consider the Euler scheme.

Note that the Euler update \eqref{Eq: X-process Euler Approximation} can be written in an affine form as
\begin{equation}
\U^x(\Xd_{k},Z^x_{k+1}) = \affm{k}{i} Z^x_{k+1} + \affc{k}{i}  \label{Eq: Euler X Affine Form},
\end{equation}
with
\begin{equation}
    \affm{k}{i}:=b^x(x_{k}^{i})\sqrt{\Delt}\qquad\text{and}\qquad
    \affc{k}{i}:=x_{k}^{i}+a^x(x_{k}^{i})\Delt.
\end{equation}
Thus, for a given quantizer, $\Gamma^x_{k+1}$, the standardized region boundaries associated with each codeword are given by
\begin{equation}
    r_{k+1}^{i,j\pm}=\frac{\tfrac{1}{2}(x_{k+1}^{j\pm1}+x_{k+1}^{j})-\affc{k}{i}}{\affm{k}{i}}, \label{Eq: X Region Boundaries}
\end{equation}
for $1\leq i, j\leq N^x$. This refers to the upper and lower region boundary of codeword $x_{k+1}^j$ when viewed from codeword $x_k^i$.
Equations \eqref{Eq: Euler X Affine Form} to \eqref{Eq: X Region Boundaries} are central to the standard RMQ algorithm, see \cite{mcwalter2017recursive}, and are presented here for use later in the paper.

\section{Quantization of Stochastic Volatility Models}
\label{Sec: Stochastic Volatility}
In this section, we consider the recursive marginal quantization of a generic stochastic volatility model described by the coupled SDEs
\begin{align}
dX_t &= a^x(X_t)\,dt + b^x(X_t)\, dW^x_t, & X_0 &= x_0, \label{Eq: X-process}\\
dY_t &= a^y(Y_t)\,dt + b^y(X_t,Y_t)\,(\rho\,dW^x_t + \sqrt{1 - \rho^2}\,dW^\perp_t), & Y_0 &= y_0 \label{Eq: Y-process}
\end{align}
defined on $\basis$, where $W^x_t$ and $W^\perp_t$ are independent standard Brownian motions. In this system, the Cholesky decomposition, specified in terms of the correlation parameter $\rho\in[-1,1]$, is chosen explicitly in order to facilitate derivations. Here, $X_t$, referred to as the \emph{independent} process, drives the specification of the stochastic volatility factor in the \emph{dependent} process $Y_t$.

The Euler scheme for the above system is given by
\begin{align}
\Xd_{k+1} &= \U^x(\Xd_k, Z^x_{k+1}), & \Xd_0 &= x_0 \label{Eq: X-Euler}\\
\Yd_{k+1} &= \Yd_k + a^y(\Yd_k) + b^y(\Xd_k,\Yd_k) \sqrt{\Delt}(\rho Z^x_{k+1} + \sqrt{1 - \rho^2} Z^\perp_{k+1}), & \Yd_0 &= y_0 \label{Eq: Y-Euler} \\
&=: \U^y(\Xd_k, \Yd_k, Z^x_{k+1}, Z^\perp_{k+1}),\label{Eq: Y-update}
\end{align}
for $0\leq k<K$, where $\U^x(\Xd_k, Z^x_{k+1})$ is defined by \eqref{Eq: X-process Euler Approximation} and $Z^x_{k+1},Z^\perp_{k+1}\sim\mathscr{N}(0,1)$ are independent standard Gaussian random variables. The main result of this paper is to show that quantizing the Euler update  $\Yd_{k+1}=\U^y(\Xd_k, \Yd_k, Z^x_{k+1}, Z^\perp_{k+1})$ is equivalent to quantizing the update given by
\begin{equation}
    \overline{\U}^y(\Xd_k,\Yd_k, Z)=\Yd_k + a^y(\Yd_k)\Delt + b^y(\Xd_k,\Yd_k)\sqrt{\Delt} Z, \label{Eq: Margined Update}
\end{equation}
where $Z\sim\mathscr{N}(0,1)$ is any standard Gaussian random variable. Having established this result, we proceed to quantize the system and derive a one-dimensional vector quantization algorithm based on a Newton-Raphson iteration.

\begin{prop}
\label{Thm: Dimension Reduction}
Given the Euler scheme defined by \eqref{Eq: X-Euler} and \eqref{Eq: Y-Euler}, the distortion of the quantizer $\Gamma^y_{k+1}$ may be expressed as
\[
    \widetilde{D}(\Gamma^y_{k+1})=\E{\big\|\overline{\U}^y(\Xd_k,\Yd_k,Z)-\Yq_{k+1}\big\|^2},
\]
where the margined update function is defined by \eqref{Eq: Margined Update} with $Z\sim\mathscr{N}(0,1)$.
\end{prop}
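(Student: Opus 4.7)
The plan is to establish a distributional equivalence: the two Gaussian noises driving the $Y$-update can be collapsed into a single standard Gaussian without altering the marginal law of $\Yd_{k+1}$, and the distortion depends only on this marginal law. First I would observe that the linear combination $\rho Z^x_{k+1} + \sqrt{1-\rho^2}\,Z^\perp_{k+1}$ is itself $\mathscr{N}(0,1)$, since the coefficients satisfy $\rho^2 + (1-\rho^2) = 1$ and $Z^x_{k+1}, Z^\perp_{k+1}$ are independent standard Gaussians. Call this random variable $\widetilde{Z}$; because $Z^x_{k+1}$ and $Z^\perp_{k+1}$ are independent of $\F_{t_k}$, so is $\widetilde{Z}$, and in particular it is independent of $(\Xd_k,\Yd_k)$. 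The Euler recursion \eqref{Eq: Y-Euler} then becomes the pointwise identity
\[
\Yd_{k+1} = \U^y(\Xd_k,\Yd_k,Z^x_{k+1},Z^\perp_{k+1}) = \overline{\U}^y(\Xd_k,\Yd_k,\widetilde{Z}).
\]

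Next, I would exploit the fact that $\widetilde{D}(\Gamma^y_{k+1}) = \E{\|\Yd_{k+1} - \pi_{\Gamma^y}(\Yd_{k+1})\|^2}$ is the expectation of a Borel function of $\Yd_{k+1}$ alone; hence its value is determined entirely by the marginal law of $\Yd_{k+1}$, regardless of any joint structure with $\Xd_{k+1}$ or $\Xd_k$. For any $Z\sim\mathscr{N}(0,1)$ independent of $(\Xd_k,\Yd_k)$, the random variable $\overline{\U}^y(\Xd_k,\Yd_k,Z)$ shares its law with $\overline{\U}^y(\Xd_k,\Yd_k,\widetilde{Z}) = \Yd_{k+1}$. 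Substituting this representative into the distortion functional therefore yields
\[
\widetilde{D}(\Gamma^y_{k+1}) = \E{\big\|\overline{\U}^y(\Xd_k,\Yd_k,Z) - \Yq_{k+1}\big\|^2},
\]
with $\Yq_{k+1}$ understood as the projection of the argument onto $\Gamma^y_{k+1}$, which is exactly the claim.

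The main obstacle is conceptual rather than computational: one must be convinced that the correlation parameter $\rho$ genuinely drops out of the one-dimensional optimisation for $\Gamma^y_{k+1}$, despite being essential to the joint dynamics of $(X,Y)$. It resurfaces only when computing joint probabilities of the form $\P(\Xq_k = x^i_k,\Yq_k = y^j_k)$, which are handled separately in Section \ref{Sec: Joint Probability}; it plays no role in determining the codewords themselves. This marginalisation over the noise is precisely what permits the efficient Newton-Raphson solver from the single-factor case to be reused unchanged in the stochastic volatility setting, which is the whole point of introducing the margined update $\overline{\U}^y$.
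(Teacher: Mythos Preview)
Your argument is correct and reaches the same conclusion, but by a more elementary route than the paper. You recognise at the outset that $\widetilde{Z}:=\rho Z^x_{k+1}+\sqrt{1-\rho^2}\,Z^\perp_{k+1}$ is itself standard normal and independent of $\F_{t_k}$, so that $\Yd_{k+1}=\overline{\U}^y(\Xd_k,\Yd_k,\widetilde{Z})$ holds pointwise; the distortion, being an expectation of a Borel function of $\Yd_{k+1}$ alone, then automatically depends only on the law of this single-noise representation. The paper instead conditions on $(\Xd_k,\Yd_k)=(x,y)$, writes the inner expectation as a double integral against the product Gaussian density in $(u,v)$, performs the explicit change of variables $z=\rho u+\sqrt{1-\rho^2}\,v$, and then invokes Fubini to integrate out $u$ against what is recognised as a Gaussian density in $u$ with mean $\rho z$ and variance $1-\rho^2$. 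Your distributional shortcut avoids that calculation entirely; the paper's computation, while longer, has the minor advantage of making the marginalisation mechanism fully explicit at the level of densities, which may help a reader unfamiliar with the Gaussian stability fact you invoke. Both approaches rely on the same underlying observation and neither requires anything the other does not already implicitly use.
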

\begin{proof}
The distortion of the quantizer $\Gamma^y_{k+1}$ for $\Yd_{k+1}$ is given in terms of the update \eqref{Eq: Y-update} as
\begin{align*}
    \widetilde{D}\big(\Gamma_{k+1}^y\big):=&\E{\big\|\Yd_{k+1}-\Yq_{k+1}\big\|^2}\\
    =&\E{\E{\left.\big\|\Yd_{k+1}-\Yq_{k+1}\big\|^2\,\right|\Xd_k,\Yd_k}}\\
    =&\int_{\R^2}\E{\big\|\U^y(x,y,Z^x_{k+1},Z^\perp_{k+1})-\Yq_{k+1}\big\|^2}\,d\P(\Xd_k\leq x,\Yd_k\leq y)\\
    =&\int_{\R^2}\E{f\big(\U^y(x,y,Z^x_{k+1},Z^\perp_{k+1})\big)}\,d\P(\Xd_k\leq x,\Yd_k\leq y),
\end{align*}
where $f(w) := \left(w - \pi_{\Gamma^y_{k+1}}(w)\right)^2$. The inner expectation may be written explicitly as
\[
    \E{f\big(\U^y(x,y,Z^x_{k+1},Z^\perp_{k+1})\big)}=\frac{1}{2\pi}\int_{\R^2}f\big(\U^y(x,y,u,v)\big)
    \exp\left(\frac{-u^2}{2}\right)\exp\left(\frac{-v^2}{2}\right)\,dv\,du.
\]
Now, let
\[
    z=\rho u+\sqrt{1-\rho^2}v,
\]
which means that
\[
    v=\frac{z-\rho u}{\sqrt{1-\rho^2}}\qquad\text{and}\qquad dv=\frac{1}{\sqrt{1-\rho^2}}\,dz.
\]
Then,
\begin{align*}
    &\E{f\big(\U^y(x,y,Z^x_{k+1},Z^\perp_{k+1})\big)}\\
    &\qquad=\frac{1}{2\pi\sqrt{1-\rho^2}}\int_{\R^2}f\big(\Ubar^y(x,y,z)\big)
    \exp\left(\frac{-u^2}{2}\right)\exp\left(\frac{-(z-\rho u)^2}{2(1-\rho^2)}\right)dz\,du\\
    &\qquad=\frac{1}{2\pi\sqrt{1-\rho^2}}\int_{\R^2}f\big(\Ubar^y(x,y,z)\big)
    \exp\left(\frac{-z^2}{2}\right)\exp\left(\frac{-(u-\rho z)^2}{2(1-\rho^2)}\right)dz\,du\\
    &\qquad=\frac{1}{\sqrt{2\pi}}\int_{\R}f\big(\Ubar^y(x,y,z)\big)\exp\left(\frac{-z^2}{2}\right) \underbrace{\frac{1}{\sqrt{2\pi(1-\rho^2)}}\int_{\R}\exp\left(\frac{-(u-\rho z)^2}{2(1-\rho^2)}\right)du}_{=1}\,dz\\
    &\qquad=\frac{1}{\sqrt{2\pi}}\int_{\R}f\big(\Ubar^y(x,y,z)\big)\exp\left(\frac{-z^2}{2}\right)dz,
\end{align*}
where we have used Fubini's theorem in the penultimate step. Thus, we obtain
\[
    \E{f\big(\U^y(x,y,Z^x_{k+1},Z^\perp_{k+1})\big)}=\E{f\big(\Ubar^y(x,y,Z)\big)}.
\]
Putting everything together, we have
\begin{align*}
    \widetilde{D}\big(\Gamma_{k+1}^y\big)&=\int_{\R^2}\E{f\big(\Ubar^y(x,y,Z)\big)}\,d\P(\Xd_k\leq x,\Yd_k\leq y)\\
    &=\int_{\R^2}\E{\big\|\Ubar^y(x,y,Z)-\Yq_{k+1}\big\|^2}\,d\P(\Xd_k\leq x,\Yd_k\leq y)\\
    &=\E{\big\|\overline{\U}^y(\Xd_k,\Yd_k,Z)-\Yq_{k+1}\big\|^2},
\end{align*}
as required.
\end{proof}

\begin{rem}
The above proposition demonstrates that the quantization of $\Yd_{k+1}$ depends only on its distribution, and, from the perspective of the distortion function, the correlation between $\Yd_{k+1}$ and $\Xd_{k+1}$ is irrelevant. Another way of saying this is that
\[
    f\big(\U^y(x,y,Z^x_{k+1},Z^\perp_{k+1})\big){\buildrel d \over =}f\big(\Ubar^y(x,y,Z)\big),
\]
where $Z\sim\mathscr{N}(0,1)$, and, since we only need to consider weighted sums of expectations of these values when computing the distortion, the correlation between $Z^x_{k+1}$ and $Z^\perp_{k+1}$ need not be considered. As we shall see later, it is necessary to take correlation into account when computing the joint probabilities of $\Yd_{k+1}$ and $\Xd_{k+1}$.
\end{rem}

As we did in the previous section, we now quantize the expression for the distortion. The quantization of the Euler scheme for the independent process, $\Xd$, proceeds directly using the standard RMQ algorithm from Section \ref{Sec: Single-factor}, and can be performed for all time steps without reference to $\Yd$. Suppose, at time step $k$, the quantizer for the dependent process $\Gamma^y_k$ has been computed along with the corresponding joint probabilities $\P(\Xq_k= x_k^i,\Yq_k= y_k^u)$, for $1\leq i \leq N^x$ and $ 1\leq u\leq N^y$, then the distortion for the quantizer of $\Yd_{k+1}$ may be approximated by
\begin{align}
    \widetilde{D}(\Gamma^y_{k+1}) &= \int_{\R^2}\E{\big\|\Ubar^y(x,y,Z)-\Yq_{k+1}\big\|^2}\,d\P(\Xd_k\leq x,\Yd_k\leq y)\notag\\
    &\approx\sum_{i=1}^{N^x} \sum_{u = 1}^{N^y}\E{(\overline{\U}^y(x^i_k, y^u_k, Z) - \Yq_{k+1})^2} \P(\Xq_k = x^i_k, \Yq_k = y^u_k) \label{Eq: SV Distortion}\\
    &=:D(\Gamma^y_{k+1})\notag.
\end{align}
The main result from \cite{pages2015recursive} shows that the approximation in \eqref{Eq: SV Distortion} results in a convergent procedure.
We again assume that the cardinality of $\Gamma_k^y$ is fixed at $N^y$ for all $0<k\leq K$ and that $\Gamma_0^y=\{y_0\}$. As before, the quantizer $\Gamma_1^y$ may be computed using standard vector quantization of the normal distribution.

For the remainder of this section we assume that, conditional on knowing the quantizers $\Gamma^x_{k}$ and $\Gamma^y_{k}$, their associated joint probabilities are known --- in Section \ref{Sec: Joint Probability} we shall provide two different approaches for computing them. Under this assumption and having rewritten the distortion \eqref{Eq: SV Distortion} in terms of the margined update function, the minimization problem that generates the quantizer at time-step $k+1$ may be specified using the Newton-Raphson iteration
\begin{equation}
    \prescript{(n+1)}{}{\mGamma}^y_{k+1} = \prescript{(n)}{}{\mGamma}^y_{k+1}-\left[\nabla^2 D\left(\prescript{(n)}{}{\mGamma}^y_{k+1}\right)\right]^{-1}\nabla D\left(\prescript{(n)}{}{\mGamma}^y_{k+1}\right), \label{Eq: Newton-Raphson}
\end{equation}
where $\mGamma^y_{k+1}$ is a column vector of the codewords in $\Gamma^y_{k+1}$ and $0\leq n<n_{\mathrm{max}}$ is the iteration index. Closely following \cite{mcwalter2017recursive}, closed-form expressions for the gradient of the distortion, $\nabla D\left(\mGamma^y_{k+1}\right)$, and the tridiagonal Hessian matrix, $\nabla^2 D\left(\mGamma^y_{k+1}\right)$, may now be derived.

To summarise notation, we write the update of the dependent process as
\[
    \overline{\U}^y(x^i_k,y^u_k, Z)=:U_{k+1}^{i,u}=\overline{m}^{i,u}_kZ + \overline{c}^u_k,
\]
where
\[
    \overline{m}^{i,u}_k:=b^y(x^{i}_{k}, y^u_k)\sqrt{\Delt}\qquad\text{and}\qquad
    \overline{c}^u_k:=y_{k}^{u}+a^y(y_k^u)\Delt.
\]
Note that the $i$ and $j$ indices, for $1\leq i,\,j\leq N^x$, always refer to the codewords of the quantizers for the $\Xd$-process, whereas the $u$ and $v$ indices, for $1\leq u,\,v\leq N^y$, always refer to the codewords of the quantizers for the $\Yd$-process.

The gradient of the distortion is given by
\begin{align}
    \frac{\partial D(\Gamma^y_{k+1})}{\partial y^v_{k+1}} &= 2 \sum_{i=1}^{N^x}\sum_{u=1}^{N^y} \E{\ind{U^{i,u}_{k+1}\in R^v_{k+1}}(y^v_{k+1} - U^{i,u}_{k+1})}\P(\Xq_k = x^i_k, \Yq_k = y^u_k) \notag \\
    &= 2 \sum_{i=1}^{N^x}\sum_{u=1}^{N^y} \int_{U^{i,u}_{k+1}\in R^v_{k+1}}(y^v_{k+1} - U^{i,u}_{k+1}) \,d\P(Z<z) \P(\Xq_k = x^i_k, \Yq_k = y^u_k)\label{gradint},
\end{align}
where $R^v_{k+1}$ is the region associated with codeword $y^v_{k+1}$.
To rewrite the integration bounds in terms of the Gaussian random variable, consider that $U_{k+1}^{i,u}\in R^v_{k+1}$ implies that $U_{k+1}^{i,u}$ lies between the region boundaries of the codeword $y^v_{k+1}$. This means
\[
    \Rm{k+1}{v}<U_{k+1}^{i,u}\leq\Rp{k+1}{v}\qquad\text{and}\qquad
    \Rpm{k+1}{v}:=\tfrac{1}{2}(y_{k+1}^{v\pm1}+y_{k+1}^{v}),
\]
and $\Rm{k+1}{1}=-\infty$ and $\Rp{k+1}{N^y}=\infty$ by definition.
Thus,
\[
    U_{k+1}^{i,u}\in R_{k+1}^v=
    \begin{cases}
        \Rm{k+1}{i,u,v}<Z\leq\Rp{k+1}{i,u,v} & \text{for $\overline{m}^{i,u}_k\geq 0$}\\[3mm]
        \Rm{k+1}{i,u,v}>Z\geq\Rp{k+1}{i,u,v} & \text{for $\overline{m}^{i,u}_k<0$,}
    \end{cases} 
\]
where
\begin{equation}
    r_{k+1}^{i,u,v\pm}:=\frac{\Rpm{k+1}{v}-\overline{c}^u_k}{\overline{m}^{i,u}_k}, \label{Eq: left boundary}
\end{equation}
is defined to be the standardized region boundary.
Similar to the region boundaries of the independent process, see \eqref{Eq: X Region Boundaries}, it refers to the region boundaries of the codeword $y^v_{k+1}$, when viewed from the codewords $x^i_{k}$ and $y^u_{k}$ of the previous time step.

Let $f_Z$ and $F_Z$ by the PDF and CDF of a standard normal random variable $Z$, respectively, and define $M_Z$ as the first lower partial expectation of $Z$,
\[
    M_Z(z):=\E{Z\ind{Z<z}}.
\]
Then, by direct evaluation of the integral in \eqref{gradint}, each element of the gradient of the distortion at time-step $k+1$ is given by
\begin{align}
\frac{\partial D(\Gamma^y_{k+1})}{\partial y^v_{k+1}} &=
2 \sum_{i=1}^{N^x}\sum_{u=1}^{N^y} \left[(y^v_{k+1} - \overline{c}^u_k)\sgn(\overline{m}^{i,u}_k)(F_{Z}(\Rp{k+1}{i,u,v}) - F_{Z}(\Rm{k+1}{i,u,v}))\right. \notag\\
&\qquad\qquad\left.-\abs{\overline{m}^{i,u}_k}(M_{Z}(\Rp{k+1}{i,u,v}) - M_{Z} (\Rm{k+1}{i,u,v}))\right] \P(\Xq_k = x^i_k, \Yq_k = y^u_k). \label{Eq: SV Gradient}
\intertext{The $N^y$-elements of the main diagonal of the tridiagonal Hessian matrix, $\nabla^2 D\left(\mGamma^y_{k+1}\right)$, are given by}
\frac{\partial^2 D(\Gamma^y_{k+1})}{\partial(y^v_{k+1})^2} &= \sum_{i=1}^{N^x}\sum_{u=1}^{N^y} \bigg[2\sgn(\overline{m}^{i,u}_k)(F_{Z}(\Rp{k+1}{i,u,v}) - F_{Z}(\Rm{k+1}{i,u,v})) \notag \\
&\qquad\qquad + \frac{1}{2\abs{\overline{m}^{i,u}_k}}f_{Z}(\Rp{k+1}{i,u,v})(y^v_{k+1} - y^{v+1}_{k+1})\\
&\qquad\qquad + \frac{1}{2\abs{\overline{m}^{i,u}_k}}f_{Z}(\Rm{k+1}{i,u,v})(y^{v-1}_{k+1} - y^{v}_{k+1})\bigg]\P(\Xq_k = x^i_k, \Yq_k = y^u_k),\notag
\intertext{with the $(N^y-1)$-elements of the super-diagonal and sub-diagonal given by}
\frac{\partial^2 D(\Gamma^y_{k+1})}{\partial y^v_{k+1} \partial y^{v+1}_{k+1}} &= \sum_{i=1}^{N^x}\sum_{u=1}^{N^y} \frac{1}{2\abs{\overline{m}^{i,u}_k}}f_{Z}(\Rp{k+1}{i,u,v})(y^v_{k+1} - y^{v+1}_{k+1}) \P(\Xq_k = x^i_k, \Yq_k = y^u_k)
\intertext{and}
\frac{\partial^2 D(\Gamma^y_{k+1})}{\partial y^v_{k+1} \partial y^{v-1}_{k+1}} &= \sum_{i=1}^{N^x}\sum_{u=1}^{N^y}  \frac{1}{2\abs{\overline{m}^{i,u}_k}}f_{Z}(\Rm{k+1}{i,u,v})(y^{v-1}_{k+1} - y^{v}_{k+1})\P(\Xq_k = x^i_k, \Yq_k = y^u_k), \label{Eq: SV Off Diag}
\end{align}
respectively.

The formulae above are similar to those derived for the standard RMQ case, with an additional summation over the codewords of the independent process. Thus, we again have a one-dimensional vector quantization problem, but this time the marginal distribution to be quantized consists of a sum of Euler updates that are weighted using joint probabilities. For this reason, we shall refer to this variant of the RMQ algorithm as the \emph{joint} RMQ algorithm (JRMQ). This will allow us to distinguish it in the text from the standard RMQ algorithm described in Section \ref{Sec: Single-factor}.

When the above formulation is compared with the approach proposed by \cite{callegaro2015pricing} (see Appendix D of their paper), it is observed that our equations have one less summation, since we do not need to condition on the independent process at time-step $k+1$. This means that the expressions for the gradient and Hessian presented here are an order of magnitude more efficient to implement.







\section{Computing the Joint Probabilities}
\label{Sec: Joint Probability}
\input{Figures/Fig_joint_region.TpX}

Up to this point, we have assumed that the joint probabilities required in \eqref{Eq: SV Gradient} to \eqref{Eq: SV Off Diag} are available. In this section, we shall show how to compute these probabilities exactly and using a computationally efficient approximation. To facilitate efficient implementation, we also provide a matrix formulation of the system in Section \ref{Sec: SV Matrix Formulation}.

From \eqref{Eq: X-Euler} and \eqref{Eq: Y-Euler} it is evident that, conditional on the realizations of $\Xd_k$ and $\Yd_k$, the joint probability distribution of $\Xd_{k+1}$ and $\Yd_{k+1}$ is bivariate Gaussian. We define
\[Z^y_{k+1} := \rho Z^x_{k+1} + \sqrt{1 - \rho^2}  Z^\perp_{k+1}, \]
such that $\Yd_{k+1} = \Ubar(\Xd_k, \Yd_k, Z^y_{k+1})$.

Consider the joint probability of $\Xd_{k+1}$ and $\Yd_{k+1}$ in the form
\begin{align}
	&F_{\Xd_{k+1},\Yd_{k+1}}(x,y)\notag\\
    &\qquad = \int_{\R^2} \P(\U^x(r, Z^x_{k+1})\leq x, \Ubar^y(r, s, Z^y_{k+1})\leq y)\, d\P(\Xd_k\leq r, \Yd_k\leq s) \notag\\
	&\qquad \approx \sum_{i=1}^{N^x}\sum_{u=1}^{N^y} \P(\U^x(x^i_k, Z^x_{k+1})\leq x, \Ubar^y(x^i_k, y^u_k, Z^y_{k+1})\leq y)\P(\Xq_k = x^i_k, \Yq_k = y^u_k) \label{Eq: Marginal Joint Probability} \\
	&\qquad = \sum_{i=1}^{N^x}\sum_{u=1}^{N^y} \P\left(Z^x_{k+1} \leq \frac{x - c^i_k}{m^i_k}, Z^y_{k+1} \leq \frac{y - \cbar^u_k}{\mbar^{i,u}_k}\right) \P(\Xq_k = x^i_k, \Yq_k = y^u_k).\notag
\end{align}
The approximation in \eqref{Eq: Marginal Joint Probability} is formed by replacing the continuous, and unknown, distributions of $\Xd_k$ and $\Yd_k$ with the discrete, and known quantized distributions of $\Xq_k$ and $\Yq_k$, as in the standard RMQ case. The necessary joint probability is then given by
\begin{equation}
\begin{split}
&\P(\Xd_{k+1}=x^j_{k+1}, \Yd_{k+1}= y^v_{k+1})\\
&\qquad\qquad\qquad= \sum_{i=1}^{N^x}\sum_{u=1}^{N^y}\left[ \int_{\Rm{k}{i,u,v}}^{\Rp{k}{i,u,v}}\int_{\Rm{k}{i,j}}^{\Rp{k}{i,j}} \phi_2(x, y, \rho)\,dx\,dy\right]\P(\Xq_k = x^i_k, \Yq_k = y^u_k),
\end{split}\label{Eq: Exact Joint Probability Integral}
\end{equation}
where $\phi_2(x,y,\rho)$ is the bivariate Gaussian density function for two standard Gaussian random variables correlated by $\rho$.
The double integral above refers to the probability of a rectangle delimited by the standardized regions of $\Gamma^x_{k+1}$ and $\Gamma^y_{k+1}$, see Figure \ref{Fig: Plot_bivariate_regions}.
Therefore, for each $1\leq j \leq N^x$ and $1\leq v\leq N^y$,
\begin{equation}
\begin{split}
\P(\Xd_{k+1}=x^j_{k+1}, \Yd_{k+1}= y^v_{k+1}) &= \sum_{i=1}^{N^x}\sum_{u=1}^{N^y}\left[ \Phi_2\big(\Rp{k}{i,j}, \Rp{k}{i,u,v},\rho\big) - \Phi_2\big(\Rm{k}{i,j}, \Rp{k}{i,u,v},\rho\big)  \right.\\
&\qquad\qquad \left.- \Phi_2\big(\Rp{k}{i,j}, \Rm{k}{i,u,v},\rho\big) + \Phi_2\big(\Rm{k}{i,j}, \Rm{k}{i,u,v},\rho\big)\right]\\
&\qquad\qquad \times \P(\Xq_k = x^i_k, \Yq_k = y^u_k),
\end{split}
\label{Eq: Exact Joint Probability}
\end{equation}
where $\Phi_2(x, y,\rho)$ is the standard bivariate Gaussian cumulative distribution function with correlation $\rho$ evaluated at $x$ and $y$.

Given the quantizers at time $k$, the joint probability in \eqref{Eq: Exact Joint Probability} is exact. However, it requires the evaluation of the bivariate Gaussian distribution function. Although most programming languages have an efficient implementation of this function, it is significantly more expensive to compute than the univariate distribution. The joint probability can be approximated using only calls to the univariate Gaussian CDF by using quadrature to approximate the inner integral of \eqref{Eq: Exact Joint Probability Integral}.

While other approaches are possible, a simple quadrature rule is used by replacing $\Xd_{k+1}$ with its quantized version, $\Xq_{k+1}$, which is constant over the interval. Then \eqref{Eq: Exact Joint Probability Integral} becomes
\begin{align}
&\P(\Xd_{k+1}=x^j_{k+1}, \Yd_{k+1}= y^v_{k+1})\notag\\
&\qquad\approx \sum_{i=1}^{N^x}\sum_{u=1}^{N^y} \left[ \int_{\Rm{k}{i,u,v}}^{\Rp{k}{i,u,v}} \phi_2\bigl(y,\rho\big|\tfrac{x^j_{k+1} - c^i_k}{m^i_k}\bigr)dy \right]\P(\Xq_{k+1} = x^j_{k+1})\P(\Xq_k = x^i_k, \Yq_k = y^u_k)\notag\\
\begin{split}
&\qquad= \sum_{i=1}^{N^x}\sum_{u=1}^{N^y} \left[ F_Z\left(\frac{\Rp{k}{i,u,v}- \rho\tfrac{x^j_{k+1} - c^i_k}{m^i_k}}{\sqrt{1 - \rho^2}}\right)- F_Z\left(\frac{\Rm{k}{i,u,v}-\rho\tfrac{x^j_{k+1} - c^i_k}{m^i_k}}{\sqrt{1 - \rho^2}}\right)\right]\\
&\qquad\qquad\qquad\times\P(\Xq_{k+1} = x^j_{k+1})\P(\Xq_k = x^i_k, \Yq_k = y^u_k),
\end{split}\label{Eq: Joint Probability Approximation}
\end{align}
where $\phi_2(y,\rho|x)$ is the conditional bivariate Gaussian density. It is worthwhile to note that this approximation to the joint probability, although derived differently, is identical to that of \cite{callegaro2015pricing}.

The computational efficiency of this approximation is demonstrated in the Sections \ref{Sec: Pricing European Options} and \ref{Sec: Pricing Exotic Options}.

\section{Implementing the Algorithm}
\label{Sec: Implementation}
In this section a concise matrix formulation for the JRMQ algorithm is presented, similar to that provided in \cite{mcwalter2017recursive} for the standard RMQ case.

\subsection{Matrix Formulation}
\label{Sec: SV Matrix Formulation}
Throughout this section, the index $1\leq i\leq N^x$ refers to time-step $k$ and $1\leq j\leq N^x$ refers to time-step $k+1$, and both are associated with the $\Xd$-process. For the $\Yd$-process, the index $1\leq u\leq N^y$ refers to time-step $k$ and the index $1\leq v\leq N^y$ refers to time-step $k+1$.

To initialize the JRMQ algorithm, the standard RMQ algorithm is applied to the $\Xd$-process and yields the quantizers $\mGamma^x_k$ and associated probabilities $\mp^x_k$ at each time-step $0\leq k\leq K$. The following three variables are initialized
\[
    [\mGamma^y_0]_1 = y_0,\qquad[\mp_0^x]_1 = 1,\qquad[\mJnt_0]_{1,1} = 1,
\]
being the time-zero quantizer, associated probability and margined probability, respectively, of the $\Yd$-process. The standard one-dimensional vector quantization algorithm (on the normal distribution) is used to produce $\mGamma^y_1$ and $\mp^y_1$, being the quantizer and associated probability vector of the $\Yd$-process at the first time step. The corresponding joint probabilities at time-step one may then be computed using either \eqref{Eq: Joint Probability Matrix} or \eqref{Eq: New Joint Probability Matrix} with $k=0$ and $N^x=N^y=1$.

We now describe the implementation of the recursive step form time-step $k$ to $k+1$. Consider the time-step $k$ quantizers
\[
[\mGamma^x_k]_i =x^i_k\qquad\text{and}\qquad[\mGamma^y_k]_u = y^u_k,
\]
of the independent and dependent processes, respectively, and the associated joint probability matrix $\mJnt_k$, of size $N^x\times N^y$,
\[
[\mJnt_k]_{i,u} = \P(\Xq_k= x^i_k, \Yq_k = y^u_k),
\]
all of which are assumed known (already computed). The rows of $\mJnt_k$ are denoted by $\mJnt_k^{(i)}$.

The time-step $k+1$ quantizer for the dependent process and associated probabilities are computed as follows: Aside from an initial guess for $\mGamma^y_{k+1}$, which is taken to be $\mGamma^y_k$,
we initialize the $N^y$-element column vector
\begin{align*}
[\mc_k]_u &= \overline{c}^u_k
\intertext{and the set of $N^y$-element column vectors, indexed by $i$,}
[\mm_k]_u^{(i)} &= \overline{m}^{i,u}_k,
\end{align*}
in terms of the time-step $k$ quantities listed above. For each iteration of the Newton-Raphson algorithm, three sets of matrices, indexed by $i$, are computed. The first two sets have matrices of size $N^y \times N^y$, given by
\begin{align*}
[\mP_{k+1}]^{(i)}_{u,v} &= \P(\Yq_{k+1} = y^v_{k+1}|\Xq_k = x^i_k, \Yq_k = y^u_k), \notag \\
&= F_{Z}(\Rp{k+1}{i,u,v}) - F_{Z}(\Rm{k+1}{i,u,v})
\intertext{and}
[\mM_{k+1}]^{(i)}_{u,v} &= M_{Z}(\Rp{k+1}{i,u,v}) - M_{Z}(\Rm{k+1}{i,u,v}),
\intertext{while the third has matrices of size $N^y \times (N^y-1)$, given by}
[\mf_{k+1}]_{u,v}^{(i)} &= f_{Z}(\Rp{k+1}{i,u,v}).
\end{align*}
The above matrices allow the gradient and the Hessian of the distortion for $\mGamma^y_{k+1}$ to be written in simplified form. The $N^y$-element gradient vector is
\begin{align}
\nabla D(\mGamma^y_{k+1})^\top &= \sum_{i=1}^{N^x} 2\mJnt_k^{(i)}(((\mGamma^y_{k+1}\mJ_{N^y})^\top - \mc_k\mJ_{N^y}) \hprod \mP_{k+1}^{(i)} - (\abs{\mm_k^{(i)}}\mJ_{N^y})\hprod \mM_{k+1}^{(i)} ), \label{Eq: SV Matrix Gradient}
\intertext{where $\hprod$ is the Hadamard (or element-wise) product and $\mJ_z$ is defined to be a length-$z$ row vector of ones. By specifying the column vector}
[\Delta \mGamma^y_{k+1}]_v &= y^{v+1}_{k+1} - y^v_{k+1},
\intertext{with $1 \leq v \leq (N^y-1)$, the $(N^y-1)$-element off-diagonal of the tridiagonal Hessian matrix is given by }
\mho &= \sum_{i=1}^{N^x} -\frac{1}{2}\mJnt_k^{(i)}((\abs{\mm_k^{(i)}}^{\hprod - 1}\mJ_{N^y-1})\hprod \mf_{k+1}^{(i)} \hprod (\Delta \mGamma^y_{k+1} \mJ_{N^y})^\top)
\intertext{and the $N^y$-element main diagonal by}
\mhm &= \sum_{i=1}^{N^x} 2\mJnt_k^{(i)} \mP_{k+1}^{(i)} + [\mho | 0] + [0 | \mho]. \label{Eq: SV Matrix Main Diag}
\end{align}
Here, $\hprod - 1$ refers to the element-wise inverse.

Equations \eqref{Eq: SV Matrix Gradient} to \eqref{Eq: SV Matrix Main Diag} provide a matrix representation of equations \eqref{Eq: SV Gradient} to \eqref{Eq: SV Off Diag} and correspond to those in the matrix implementation of the single-factor RMQ case. This allows straightforward implementation of the Newton-Raphson algorithm described by \eqref{Eq: Newton-Raphson}, ultimately yielding $\mGamma^y_{k+1}.$ It remains to compute the necessary probabilities.

The elements of the joint probability matrix, $\mJnt_{k+1}$, at time-step $k+1$, are computed using the bivariate Gaussian distribution as
\begin{align}
[\mJnt_{k+1}]_{j,v} &=  \sum_{i=1}^{N^x}\sum_{u=1}^{N^y} \P(\Xq_{k+1} =x^j_{k+1}, \Yq_{k+1} = y^v_{k+1} | \Xq_{k} =x^i_{k}, \Yq_{k} = y^u_{k} ) \P(\Xq_k = x^i_k, \Yq_k = y^u_k) \notag \\
\begin{split}
&=\sum_{i=1}^{N^x}\sum_{u=1}^{N^y} \left( \Phi_2(\Rp{k}{i,j}, \Rp{k}{i,u,v},\rho) - \Phi_2(\Rm{k}{i,j}, \Rp{k}{i,u,v},\rho) \right.\\
&\qquad\qquad \left. - \Phi_2(\Rp{k}{i,j}, \Rm{k}{i,u,v},\rho) + \Phi_2(\Rm{k}{i,j}, \Rm{k}{i,u,v},\rho) \right) [\mJnt_k]_{i,u},
\end{split}\label{Eq: Joint Probability Matrix}
\intertext{with the probabilities associated with the new quantizer given by}
\mp^y_{k+1} &= \sum_{j=1}^{N^x} \mJnt_{k+1}^{(j)}.f \label{Eq: Quantizer Probabilities}
\intertext{Finally, to compute the transition probability matrix for the time-step $k+1$, it is necessary to recompute the $\mP_{k+1}$ matrix using the final regions associated with the new set of codewords at $k+1$. Then}
[\mP^y_{k+1}]_{u,v} &= \frac{\P(\Yq_k = y^u_k, \Yq_{k+1} = y^v_{k+1})}{\P(\Yq_k = y^u_k)} \notag \\
&= \frac{\sum_{i=1}^{N^x} \P(\Yq_{k+1} = y^v_{k+1} | \Xq_k = x^i_k, \Yq_k = y^u_k)\P(\Xq_k = x^i_k, \Yq_k = y^u_k)}{\P(\Yq_k = y^u_k)} \notag \\
&=\frac{\sum_{i=1}^{N^x} [\mP_{k+1}]^{(i)}_{u,v}[\mJnt_k]_{i,u}}{[\mp^y_k]_u}. \label{Eq: Transition Probability Matrix}
\end{align}

To compute the joint probabilities using the computationally efficient approximation instead of the bivariate Gaussian distribution, \eqref{Eq: Joint Probability Matrix} is replaced by
\begin{align}
[\mJnt_{k+1}]_{j,v} &= \sum_{i=1}^{N^x}\sum_{u=1}^{N^y} \left[ F_Z\left(\frac{\Rp{k}{i,u,v}- \rho\tfrac{x^j_{k+1} - c^i_k}{m^i_k}}{\sqrt{1 - \rho^2}}\right)- F_Z\left(\frac{\Rm{k}{i,u,v}-\rho\tfrac{x^j_{k+1} - c^i_k}{m^i_k}}{\sqrt{1 - \rho^2}}\right)\right] [\mp_{k+1}^x]_j [\mJnt_k]_{i,u}. \label{Eq: New Joint Probability Matrix}
\end{align}
The time-step $k+1$ quantizer probabilities and transition probability matrix, \eqref{Eq: Quantizer Probabilities} and \eqref{Eq: Transition Probability Matrix}, are now computed in terms of \eqref{Eq: New Joint Probability Matrix}.

\subsection{Zero Boundary Behaviour}
As in the standard RMQ algorithm, to correctly model the underlying processes it may be necessary to implement a reflecting or absorbing boundary at zero.
Both the dependent and independent process can be modified in this way, but, once the distribution of either process has been adjusted, it is difficult to compute the joint probabilities using the bivariate Gaussian distribution. Thus, the joint probability approximation \eqref{Eq: Joint Probability Approximation} is used.

\subsubsection*{The Independent Process}
In the stochastic volatility setting, the independent process represents the stochastic volatility or variance of the dependent process. This implies that it must remain strictly positive and thus it is only necessary to consider a reflecting boundary. In Monte Carlo simulation a reflecting boundary is modelled by the fully-truncated Euler scheme, shown to be the least-biased scheme for stochastic volatility models in \cite{lord2010comparison}\footnote{Note that a \emph{truncated} Euler scheme models \emph{reflecting} boundary behaviour.}.

Implementing a reflecting boundary in the standard RMQ case is discussed in detail in \cite{mcwalter2017recursive} and modifying the independent process in this way leaves the JRMQ algorithm unchanged.

\subsubsection*{The Dependent Process}
As the dependent process usually represents either an asset price or an interest rate, depending on the application, either a reflecting or absorbing boundary at zero may be appropriate. When the process modeled is an asset price, an absorbing boundary allows the possibility of bankruptcy, whereas a reflecting boundary is necessary to correctly model interest rates.

Modifying the algorithm to account for an absorbing boundary at zero is straightforward and incurs no additional computational burden. To ensure the non-negativity of the process, the domain of the marginal distribution implied by the quantizer at time $k$ is smaller than zero if
\[ Z < -\frac{\cbar_{k}^{u}}{\mbar_{k}^{i,u}}, \]
which implies that, under the requirement to ensure positive codewords at the next time step, the left-most region boundary must be set to
\[
    r_{k+1}^{i,u,1-}=-\frac{\cbar_{k}^{u}}{\mbar_{k}^{i,u}},
\]
for $1\leq i \leq N^x$ and $1\leq u \leq N^y$. This is equivalent to setting $\rm{1}_{k+1}=0$ and thus truncates the domain, cf.\ \eqref{Eq: left boundary}.

Truncating the domain of the implied marginal distribution at each time step will result in quantizers with probabilities that do not sum to one. This is because there is now effectively an additional codeword at zero, at which probability has accumulated. At the completion of the algorithm, the quantizers can be augmented with this additional codeword and its associated probability.

To model a reflecting boundary at zero, first the domain of the implied marginal distribution at each time step must be modified such that only positive codewords can be attained.
This is achieved by altering the left-most region boundary as above. Secondly, the distribution, density and lower partial expectation functions that appear in \eqref{Eq: SV Gradient} to \eqref{Eq: SV Off Diag} must be replaced by their reflected counterparts,
\begin{align*}
    {f}_{\overline{Z}^{i,u}_{k+1}}(y)&=f_{Z}(y)+f_{Z}(2\bar{y}^{i,u}_k-y),\notag\\
 	 {F}_{\overline{Z}^{i,u}_{k+1}}(y)&=F_{Z}(y)-F_{Z}(2\bar{y}^{i,u}_k-y),\notag\\
\intertext{and}
    {M}_{\overline{Z}^{i,u}_{k+1}}(y)&=M_{Z}(y)+M_{Z}(2\bar{y}^{i,u}_k-y)- 2\bar{y}^{i,u}_kF_{Z}(2\bar{y}^{i,u}_k-y), 
\end{align*}
for $y\in[\bar{y}^{i,u}_k, \infty)$, where	 $\bar{y}^{i,u}_k=-\tfrac{\cbar_{k}^{u}}{\mbar_{k}^{i,u}}$.
Note that these functions have an $i$ and $u$ subscript, indicating that they will be different for each term in the summations of \eqref{Eq: SV Gradient} to \eqref{Eq: SV Off Diag}.

\section{Pricing European Options}
\label{Sec: Pricing European Options}

\begin{figure}[t!]
     \begin{center}
         \includegraphics[width=\columnwidth]{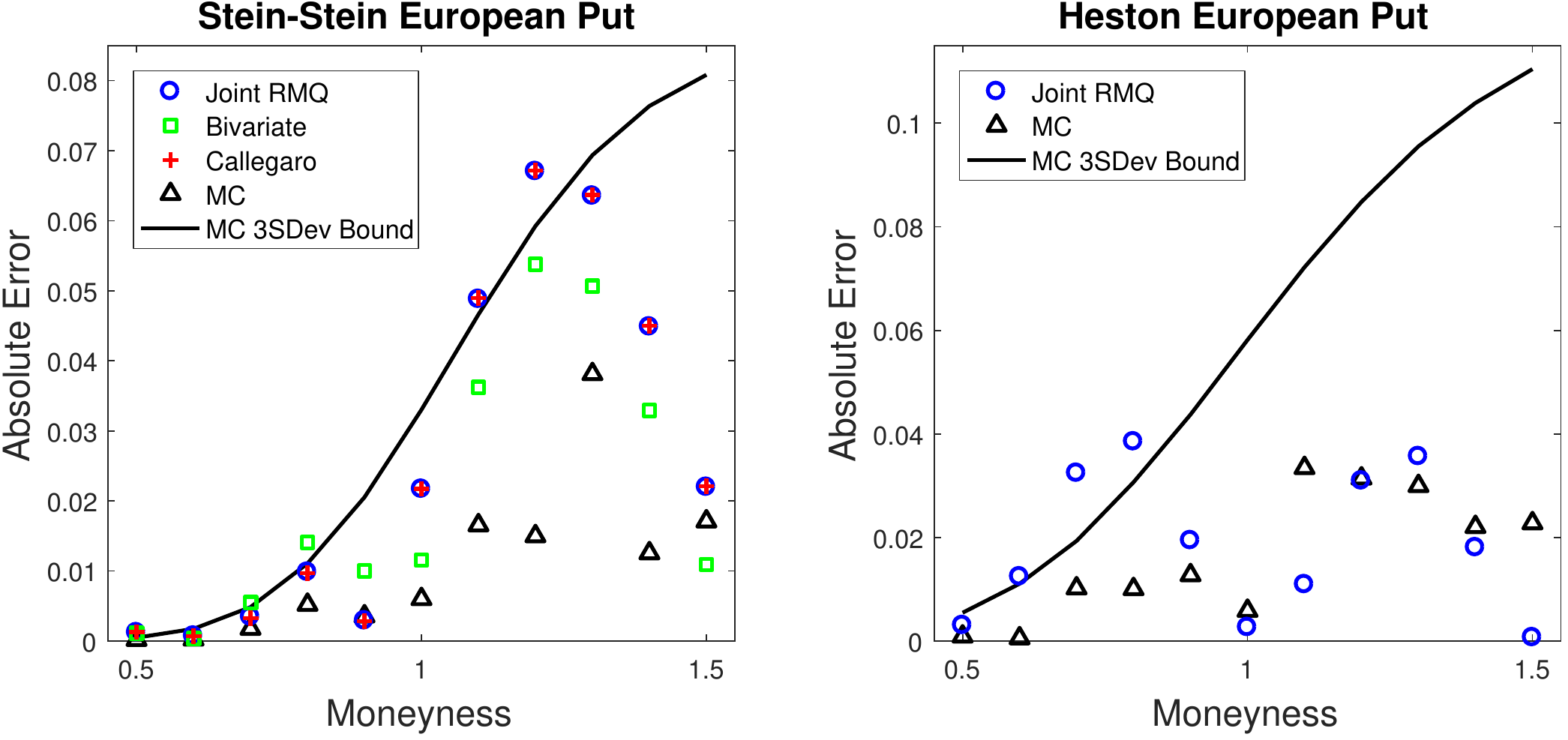}
     \end{center}
     \caption{The European put pricing error under the Stein-Stein and Heston models.}
     \label{Fig: Plot_Stein_Stein_Heston}
\end{figure}

In this section, we consider the pricing of European options under the \cite{stein1991stock}, \cite{heston1993closed} and SABR \citep{hagan2002managing} models. The Stein-Stein and Heston models are both amenable to semi-analytical pricing using Fourier transform techniques, whereas an analytical approximation exists for both the Black and Bachelier implied volatility under the SABR model. The Fourier pricing technique implemented uses the little trap formulation of the characteristic function from \cite{albrecher2006little} for the Heston model, while the \cite{schobel1999stochastic} characteristic function formulation is used for the Stein-Stein model. The implied volatility approximation for the SABR model is the latest from \cite{hagan2016universal}.

The Stein-Stein example is used to illustrate the computational efficiency advantage of the new algorithm compared to the RMQ algorithm from \cite{callegaro2015pricing}, whereas the Heston example serves to highlight the effectiveness of correctly modelling the zero-boundary behaviour of the independent process. For the SABR model, parameter sets were chosen that are difficult to handle with traditional methods, illustrating the flexibility of the JRMQ algorithm.

All simulations were executed using MATLAB 2016b on a computer with a $2.00$ GHz Intel i-$3$ processor and $4$ GB of RAM. All Monte Carlo simulations in this section used $500\,000$ paths with $120$ time steps per path.

\begin{figure}
     \begin{center}
         \includegraphics[width=\columnwidth]{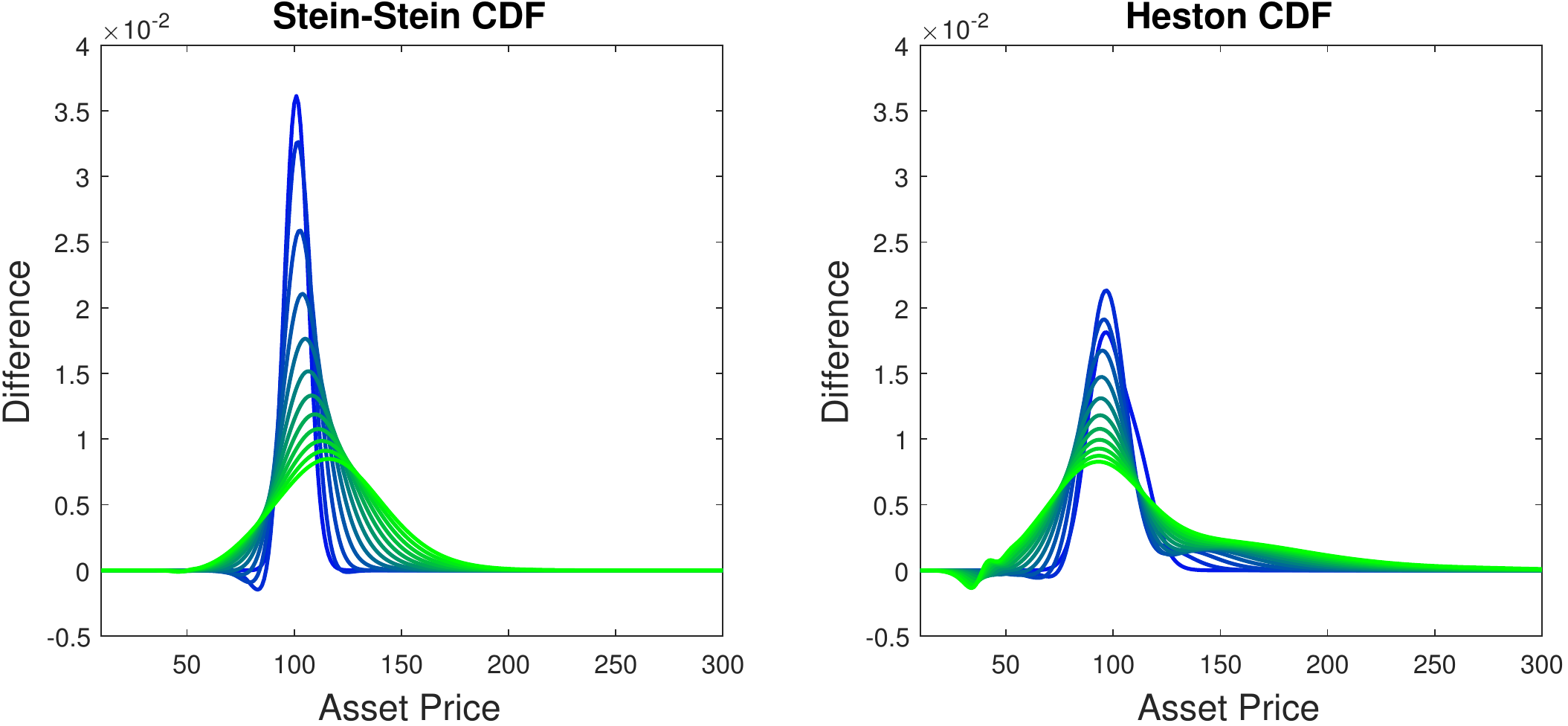}
     \end{center}
     \caption{The error in the marginal distributions for the dependent process in the Stein-Stein and Heston models.}
     \label{Fig: Plot_Stein_Stein_Heston_cdf}
\end{figure}

\subsection{The Stein and Stein Model}
\begin{figure}[t!]
     \begin{center}
         \includegraphics[width=\columnwidth]{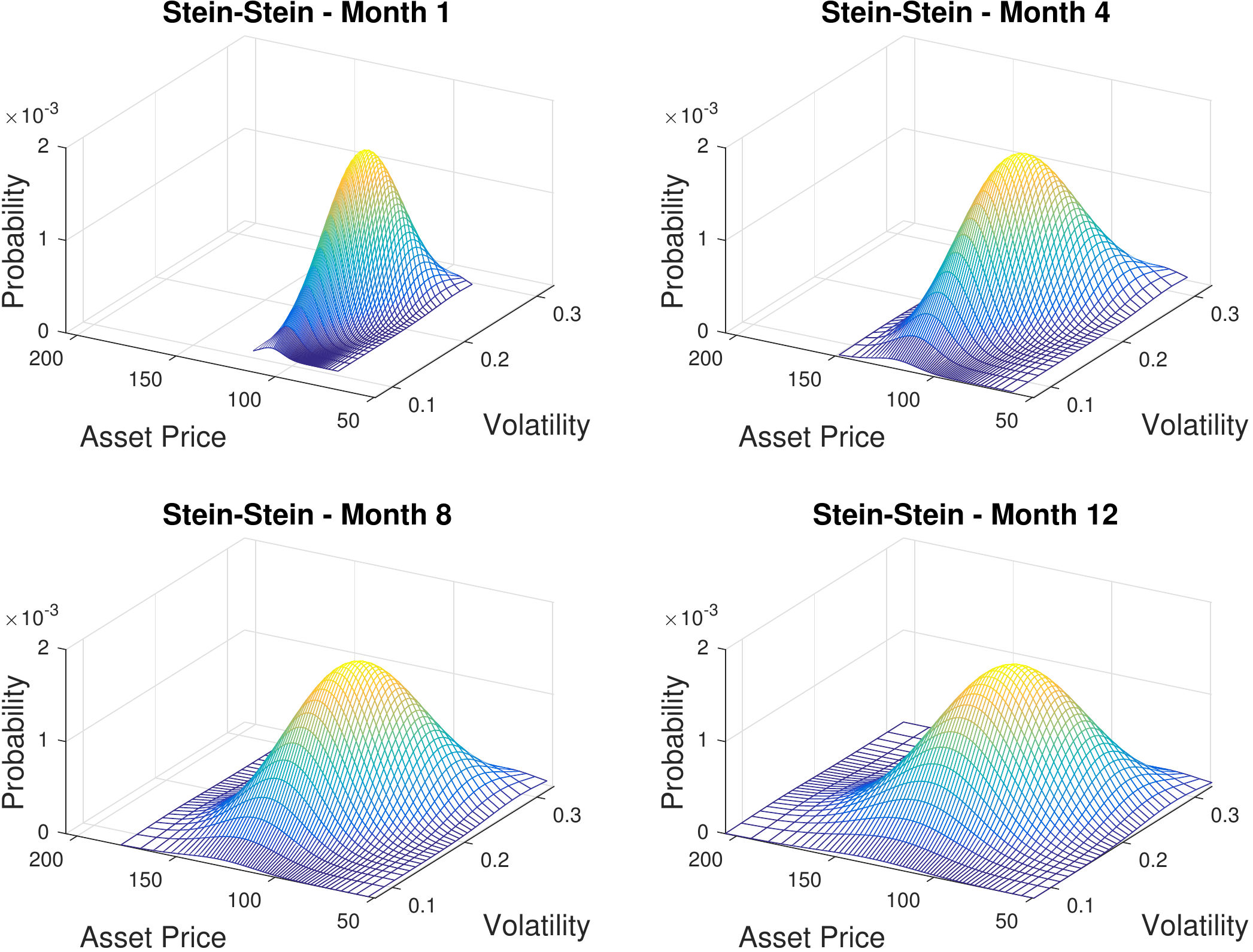}
     \end{center}
     \caption{Evolution of the approximate joint probability for the Stein-Stein model.}
     \label{Fig: Plot_Stein_Stein_joint}
\end{figure}

The SDEs for the Stein-Stein model may be specified in the notation of \eqref{Eq: X-process} and \eqref{Eq: Y-process} as
\begin{align*}
&a^x(X_t) = \kappa(\theta - X_t), & &b^x(X_t) = \sigma, \\
&a^y(Y_t) = rY_t, & &b^y(X_t, Y_t) = X_tY_t ,
\end{align*}
and in the example considered the parameters chosen are $\kappa = 4$, $\theta = 0.2$, $\sigma = 0.1$, $r = 0.0953$, $\rho=-0.5$, $x_0 = 0.2$ and $y_0=100$, with the maturity of the option set at one year. These parameters are from Table 1 in \cite{schobel1999stochastic}.

The left graph in Figure \ref{Fig: Plot_Stein_Stein_Heston} displays the pricing error of four algorithms. The first is the JRMQ algorithm presented in this paper using the joint probability approximation from \eqref{Eq: Joint Probability Approximation}, the second is the JRMQ algorithm using the bivariate Gaussian distribution, the third is the stochastic volatility RMQ algorithm from \cite{callegaro2015pricing} and the fourth is a two-dimensional standard Euler Monte Carlo simulation.

For the RMQ algorithms, $K = 12$ time steps were used with $N^x = 30$ codewords at each step for the independent process and $N^y=60$ codewords for the dependent process. We consider variable moneyness by changing the strike over the fixed initial asset price.

The JRMQ algorithm took $3.8$ seconds to price all strikes when using the probability approximation and $77.2$ seconds when using the bivariate Gaussian distribution. The algorithm from \cite{callegaro2015pricing} took $26.3$ seconds to price all strikes and the Monte Carlo simulation took $6.6$ seconds per strike.

The computation time of the JRMQ algorithm for this example was approximately 7 times faster than the algorithm of \cite{callegaro2015pricing}, when using approximate joint probabilities. Despite this large decrease in computation time, the JRMQ algorithm prices with the same accuracy. Barring three points, both algorithms price to within the three standard deviation bound of the significantly higher resolution Monte Carlo simulation. Using the bivariate Gaussian distribution instead of the approximation significantly reduces the average error over the range of moneyness considered, but this is at the expense of a large increase in computation time. For this reason, the remaining applications use only the approximation.

Since the Stein-Stein model has a closed-form characteristic function, it is possible to compute the marginal distribution for the dependent process. The difference between this marginal distribution and the one computed using the  JRMQ algorithm is presented in the left graph in Figure \ref{Fig: Plot_Stein_Stein_Heston_cdf}. The curve is blue at the initial time and changes color to green as we move toward maturity. The maximum error is under $4\%$ initially and decays to well under $1\%$ as time advances. These errors are in line with those of the one-dimensional Euler RMQ case illustrated in \cite{mcwalter2017recursive}.

Figure \ref{Fig: Plot_Stein_Stein_joint} illustrates the evolution of the approximate joint probabilities over time. Note that these are the joint probabilities associated with the quantizers and thus the grid is not uniform; there are $60$ points along the asset price axis and $30$ points along the volatility axis.

\subsection{The Heston Model}
\begin{figure}[t!]
     \begin{center}
         \includegraphics[width=\columnwidth]{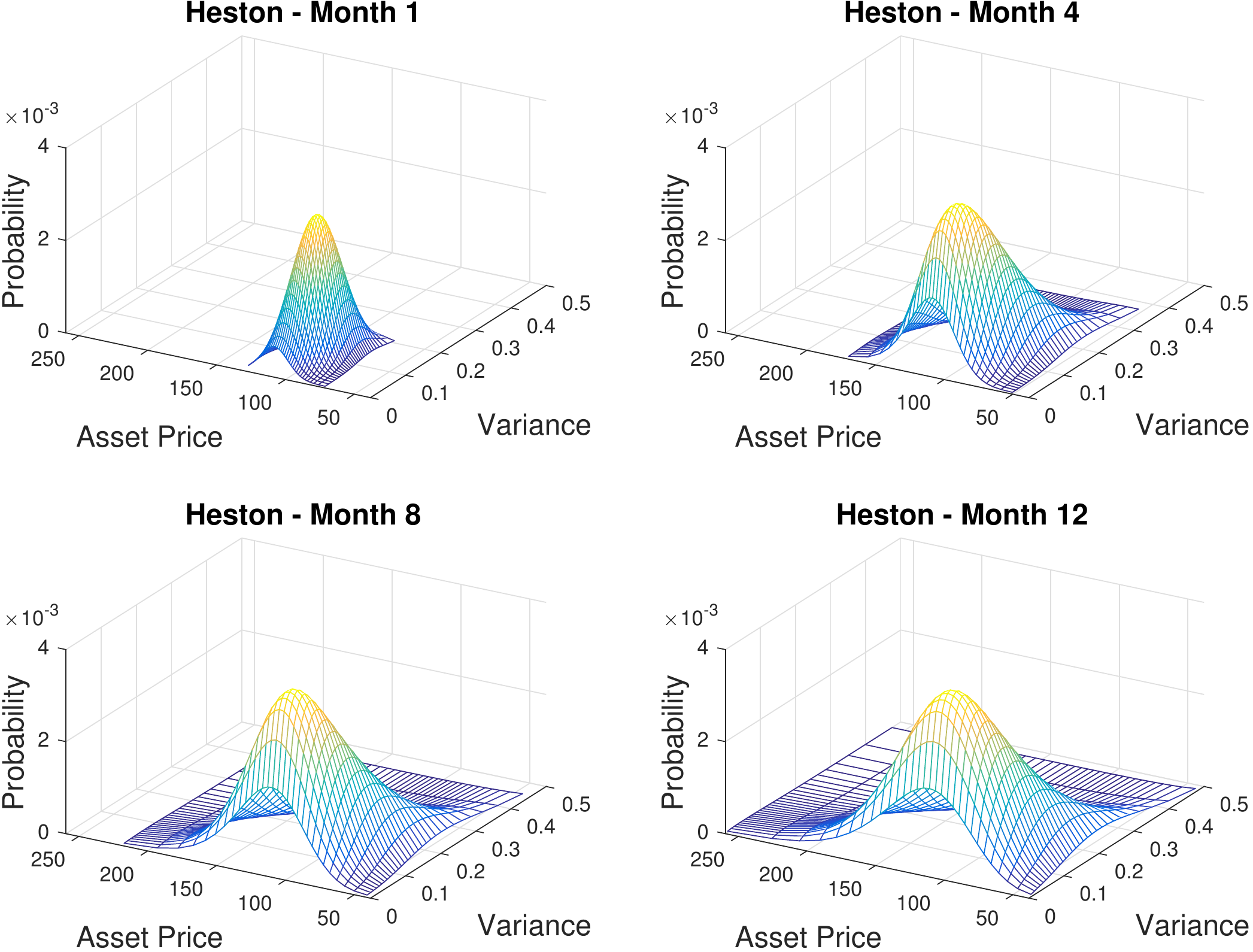}
     \end{center}
     \caption{Evolution of the approximate joint probability for the Heston model.}
     \label{Fig: Plot_Heston_joint}
\end{figure}

The SDEs for the Heston model may be specified in the notation of \eqref{Eq: X-process} and \eqref{Eq: Y-process} as
\begin{align*}
&a^x(X_t) = \kappa(\theta - X_t), & &b^x(X_t) = \sigma\sqrt{X_t}, \\
&a^y(Y_t) = rY_t, & &b^y(X_t, Y_t) = \sqrt{X_t}Y_t,
\end{align*}
and in the example considered the parameters chosen are $\kappa = 2$, $\theta = 0.09$, $\sigma = 0.4$, $r = 0.05$, $\rho=-0.3$, $x_0 = 0.09$ and $y_0=100$, with the maturity of the option set at one year. These parameters are based on the SV-I parameter set from Table 3 of \cite{lord2010comparison}, with $\sigma$ adjusted from $1$ to $0.4$ to ensure that the Feller condition is satisfied for the square-root variance process.

The right graph in Figure \ref{Fig: Plot_Stein_Stein_Heston} displays the pricing error for JRMQ compared with a two-dimensional fully truncated log-Euler scheme, suggested as the least-biased Monte Carlo scheme for stochastic volatility models in \cite{lord2010comparison}. For the JRMQ algorithm, $K = 12$ time steps were used with $N^x = N^y = 30$ codewords at each step for both processes. The JRMQ algorithm took $1.4$ seconds to price all strikes, whereas the Monte Carlo simulation took $7.8$ seconds for a single strike.

A reflecting zero-boundary was used when computing the standard RMQ algorithm for the independent variance process.
Compared to a high-resolution Monte Carlo simulation, the JRMQ algorithm performs very well despite the coarseness of the grid.

Even though the Feller condition is satisfied, due to the discretization of time, there is a non-zero probability of the Euler approximation for the variance process becoming negative. This is handled in the RMQ algorithm by using a reflecting zero-boundary. Modelling the boundary in this way leads to an increased accuracy in pricing, especially when compared to the Monte Carlo simulation.

The right graph in Figure \ref{Fig: Plot_Stein_Stein_Heston_cdf} presents the error in the marginal distribution of the dependent process implied by the RMQ algorithm when compared to the distribution obtained from the characteristic function using the Fourier transform technique. The error is just over $2\%$ initially and decreases to below $1\%$ as time advances. Figure \ref{Fig: Plot_Heston_joint} illustrates the evolution of the joint probabilities of the asset price and variance process.

\subsection{The SABR Model}
\begin{figure}[t!]
     \begin{center}
         \includegraphics[width=\columnwidth]{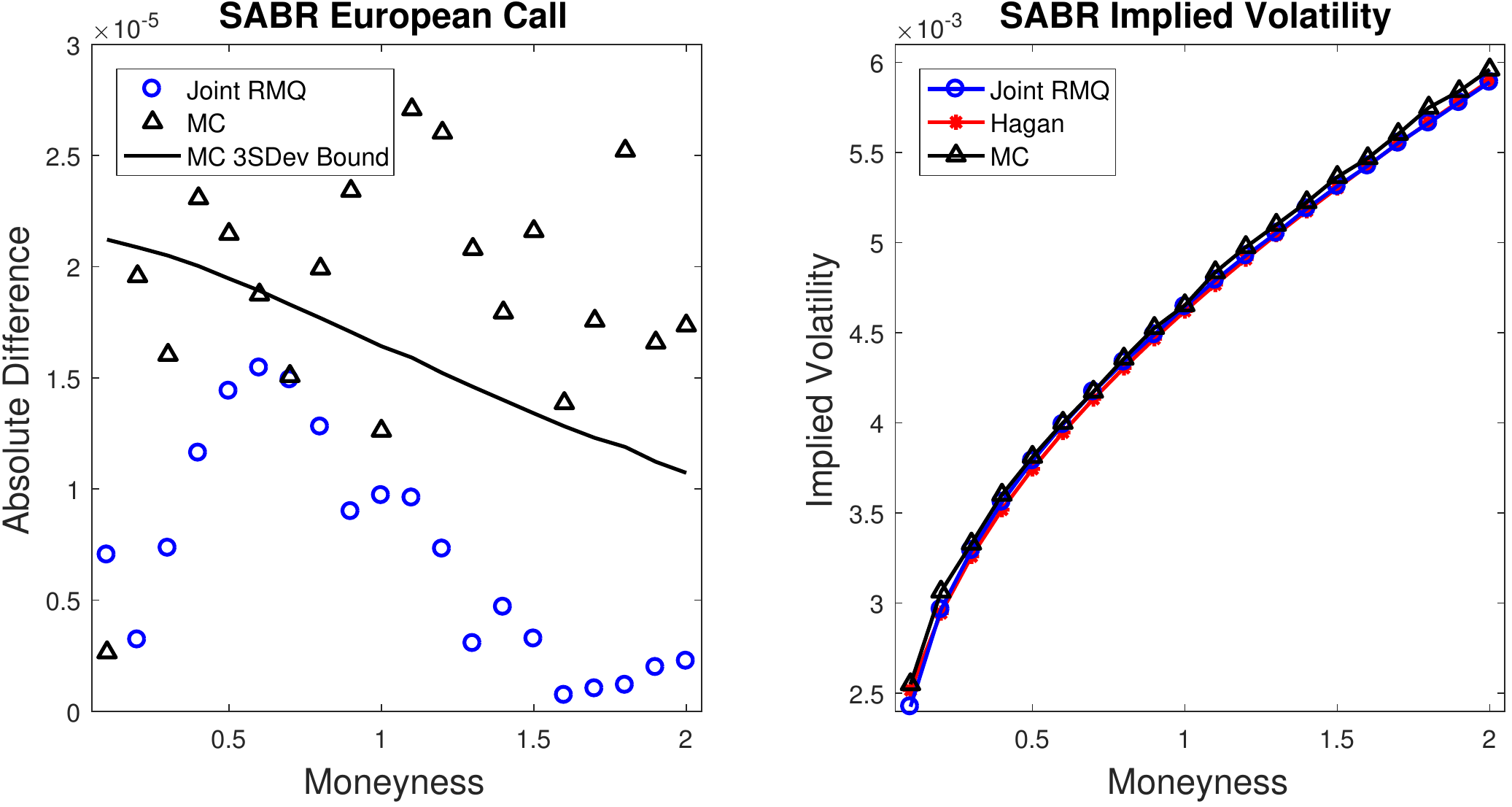}
     \end{center}
     \caption{Prices and implied Bachelier volatilities for the standard SABR model, using a parameter set applicable for interest rates.}
     \label{Fig: Plot_chen_SABR}
\end{figure}

The SDEs for the standard SABR model may be specified in the notation of \eqref{Eq: X-process} and \eqref{Eq: Y-process} as
\begin{align*}
&a^x(X_t) = 0, & &b^x(X_t) = \nu X_t, \\
&a^y(Y_t) = 0, & &b^y(X_t, Y_t) = X_t Y_t^\beta,
\end{align*}
with $0\leq\beta\leq1$.
A partial reason for the popularity of the SABR model is that the implied volatility may be computed using an analytical approximation \citep{hagan2002managing}. Further work has extended the original formula (see, for example, \cite{obloj2007fine} and \cite{paulot2015asymptotic}), with
the latest and most accurate approximation given in \cite{hagan2016universal}, which allows a more general specification of the volatility function.

In this section, we consider European options for two examples of extreme parameter sets that may arise in the context of interest rate modelling.

In Figure \ref{Fig: Plot_chen_SABR} the parameters chosen are $\beta = 0.7$, $\nu=0.3$, $\rho=-0.3$, $x_0 = 20\%$ and $y_0=0.5\%$, with the maturity of the option set at one year.
This parameter set is Test Case III from \cite{chen2012efficient}, and was specifically chosen to be appropriate to the fixed income market and to illustrate the correct handling of zero-boundary behaviour. The reference price is the implied volatility formula with the boundary correction from \cite{hagan2016universal}.

For the JRMQ algorithm, $K=24$ time steps were used with $N^x = N^y = 30$ codewords at each step for both processes. A reflecting zero-boundary was implemented for the dependent process. The Monte Carlo simulation utilized a fully-truncated Euler discretization scheme.

The three standard deviation bound in the left graph in Figure \ref{Fig: Plot_chen_SABR} indicates that the  Monte Carlo simulation is not converging to the same result as the \cite{hagan2016universal} implied volatility, used here as the reference price.
In their discussion, \cite{chen2012efficient} indicate that this is a challenging parameter set for traditional Monte Carlo simulations. Barring a single point, the JRMQ algorithm is more accurate than the Monte Carlo simulation across the range of strikes. It is also significantly faster to compute. The JRMQ algorithm took $5.3$ seconds to price all strikes, whereas the Monte Carlo simulation took $13.4$ seconds per strike, due to the much larger number of time steps.


\begin{figure}[t!]
     \begin{center}
         \includegraphics[width=\columnwidth]{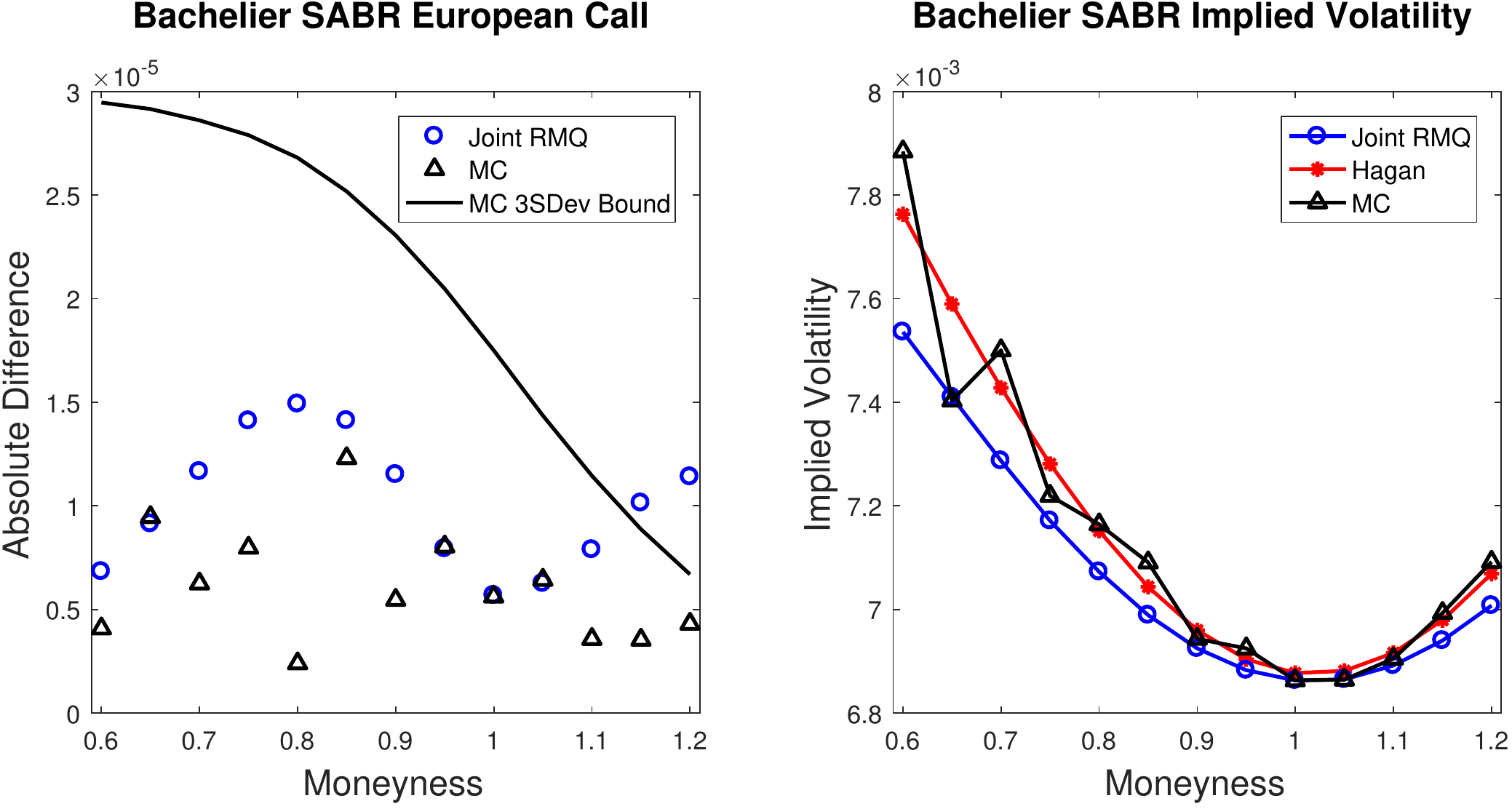}
     \end{center}
     \caption{Implied Bachelier volatility and pricing error for the Bachelier SABR model.}
     \label{Fig: Plot_normal_SABR}
\end{figure}

In Figure \ref{Fig: Plot_normal_SABR}, European call option prices and corresponding implied Bachelier volatilities are displayed for
the RMQ algorithm, the Hagan implied volatility approximation, and an Euler Monte Carlo simulation. The parameters chosen are $\beta = 0$, $\nu=0.3691$, $\rho=-0.0286$, $X_0 = 0.68\%$, $Y_0=4.35\%$, with the maturity of the option set at one year. This parameter set is Test Case I from \cite{korn2013exact} and it describes a challenging simulation environment with a low initial forward rate which is very volatile.

For the JRMQ algorithm, $K = 24$ time steps were used with $N^x = 10$ codewords at each step for the independent process and $N^y=90$ codewords for the dependent process. The JRMQ algorithm took $5.5$ seconds to price all strikes, whereas the Monte Carlo simulation took $5.6$ seconds per strike.

Despite the extreme parameter set, all but two of the JRMQ prices fall well within the three standard deviation bound of the much higher resolution Monte Carlo simulation.

\section{Pricing Exotic Options}
\label{Sec: Pricing Exotic Options}
\begin{figure}[t!]
     \begin{center}
         \includegraphics[width=\columnwidth]{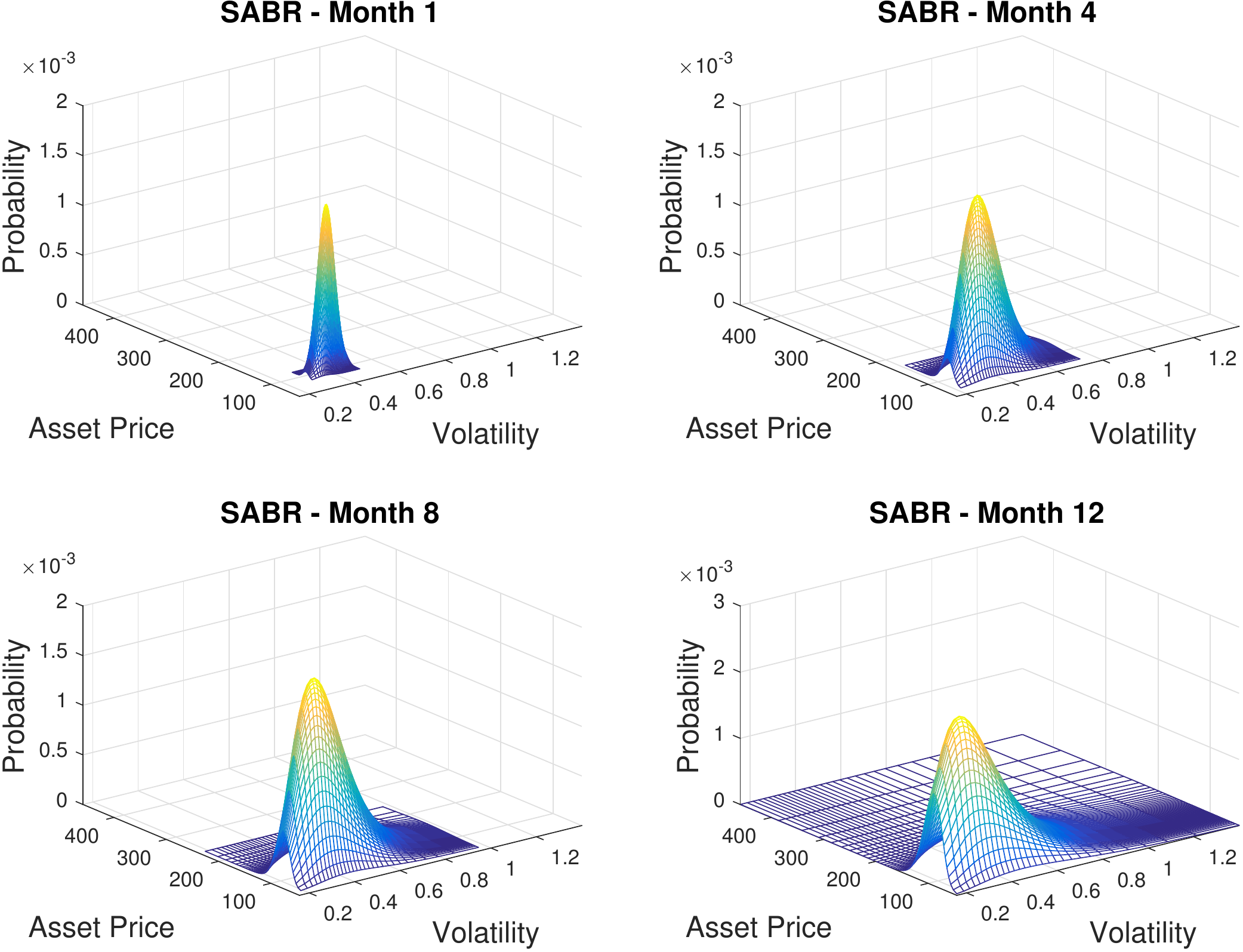}
     \end{center}
     \caption{Time-evolution of the approximate joint probability for the SABR model.}
     \label{Fig: Plot_joint_prob_SABR}
\end{figure}

An advantage of the RMQ algorithm, similar to binomial and trinomial tree methods, is the ability to price many options off the same grid that results from a single run. This is demonstrated in this section by using a single pass of the JRMQ algorithm to price European, Bermudan and barrier options, and volatility corridor swaps.

The SABR model parameters for all the examples in this section are $\beta = 0.9$, $\nu=0.4$, $\rho=-0.3$, $X_0 = 0.4$ and $Y_0=S_0\exp(rT)$, where $Y$ now models the $T$-forward price of an equity asset with $S_0 = 100$, $r=0.05$ and the maturity $T$ is equal to one year. The JRMQ algorithm used $K = 24$ time steps with $N^x = 30$ codewords for the volatility process and $N^y = 60$ codewords for the forward price process.
The Monte Carlo simulations are executed using a fully-truncated Euler scheme with $500\,000$ paths and $120$ time-steps.

To generate the quantization grid, the JRMQ algorithm took $7.8$ seconds for these parameters. The computational cost of generating derivative prices using the resulting grid is negligible in comparison. Figure \ref{Fig: Plot_joint_prob_SABR} illustrates the time-evolution of the approximate joint probability of the forward price of the asset and the volatility over the course of the year.

\begin{figure}[t!]
     \begin{center}
         \includegraphics[width=\columnwidth]{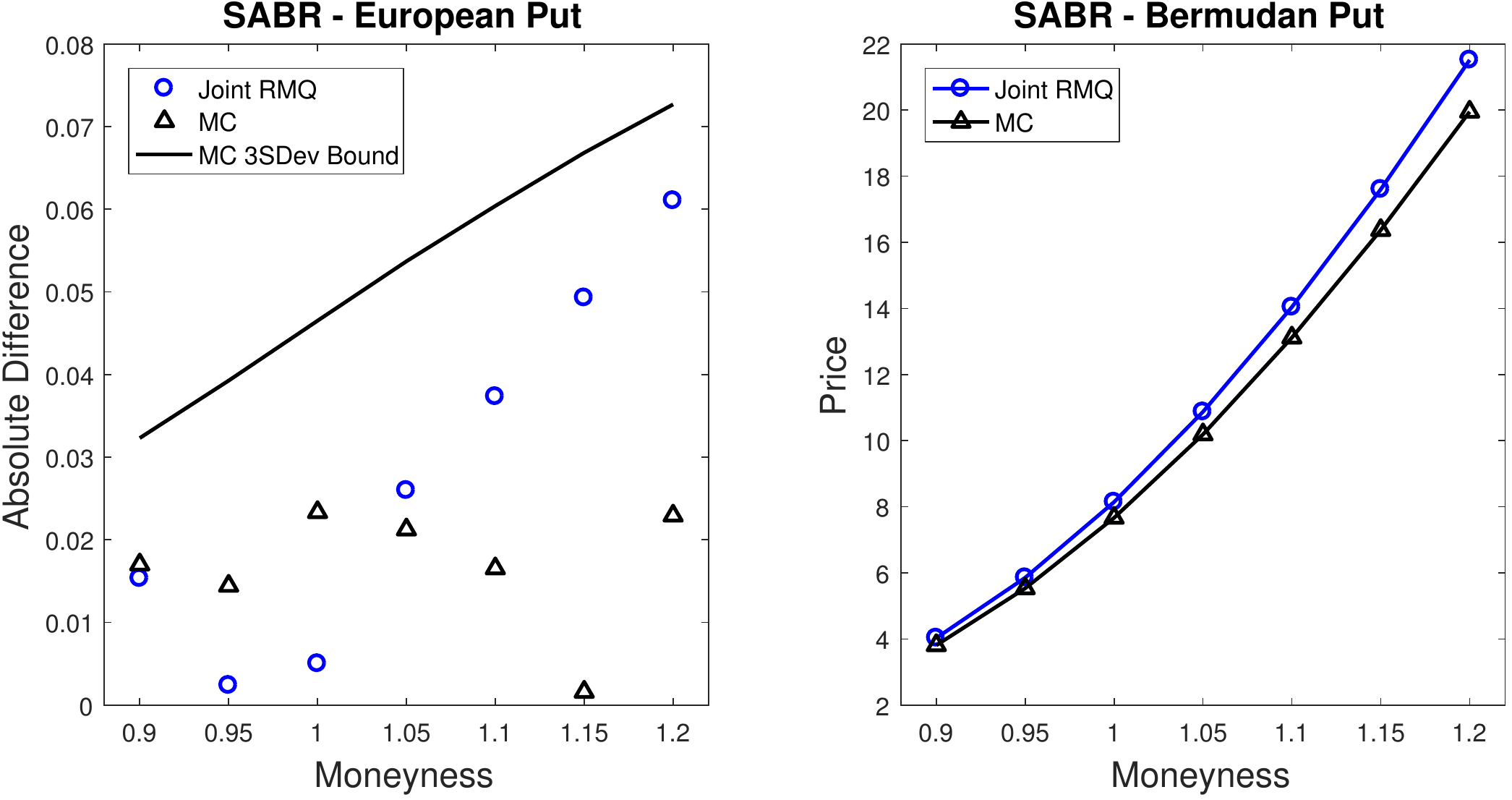}
     \end{center}
     \caption{European and Bermudan put option prices for the SABR model.}
     \label{Fig: Plot_tian_SABR1}
\end{figure}

The left graph in Figure \ref{Fig: Plot_tian_SABR1} illustrates the difference in the prices of European put options using JRMQ and the prices using the implied volatility formula of \cite{hagan2016universal}. The right graph shows the prices for a Bermudan put with monthly exercise opportunities using JRMQ and a least-squares Monte Carlo simulation. For each strike, computing an option price using Monte Carlo simulation takes approximately $14.5$ seconds for the European options and $16.9$ seconds for the Bermudan options. The high-level algorithm for pricing Bermudan options using a quantization grid is outlined in \cite{mcwalter2017recursive}.

\begin{figure}[t!]
     \begin{center}
         \includegraphics[width=\columnwidth]{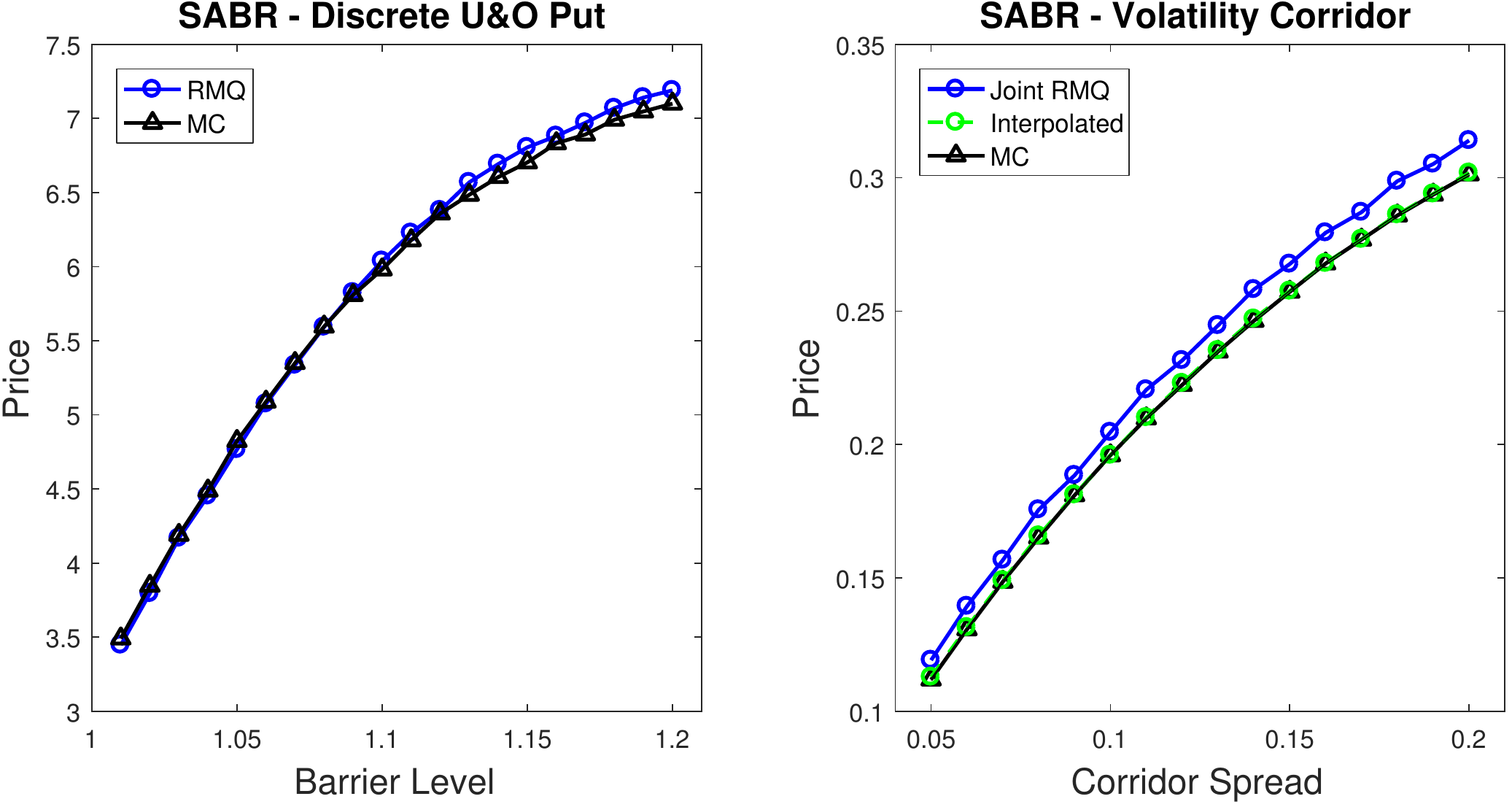}
     \end{center}
     \caption{Price comparison for discrete up-and-out put options and volatility corridor swaps in the SABR model.}
     \label{Fig: Plot_tian_SABR2}
\end{figure}

The left graph in Figure \ref{Fig: Plot_tian_SABR2} shows the JRMQ and Monte Carlo prices for a discrete up-and-out put option, with monthly monitoring, where the barrier level is expressed as a multiple of the at-the-money strike. The right graph shows the prices for a series of volatility corridor swaps. The payoff of a volatility corridor swap is given by
\begin{align}
 \frac{1}{T}\int_0^T X_z \ind{L < S_z < H}\, dz, \label{Eq: Volatility Corridor Swap Payoff}
\end{align}
where $S_t = Y_t\exp(-r(T - t))$ is the asset price in our deterministic interest-rate framework and $L$ and $H$ specify the corridor of the asset price in which the volatility is accumulated. The algorithm for pricing volatility corridor swaps on a quantization grid in the stochastic volatility setting is presented in \cite{callegaro2015pricing} and uses a left-endpoint approximation to the integral in \eqref{Eq: Volatility Corridor Swap Payoff}

The corridor spreads on the $x$-axis represent a percentage bound around the initial asset price value, i.e., the lower bound of the corridor is given by $L = S_0(1 - s)$ and the upper bound by $H = S_0(1 + s)$, where $s$ is the corridor spread. The vertical gap between the prices generated by the Monte Carlo simulation and the RMQ algorithm is partially due to the increased accuracy of the Monte Carlo simulation when using simple quadrature to approximate \eqref{Eq: Volatility Corridor Swap Payoff}, as a result of the large number of time steps used. For a single barrier value or a single corridor spread, the Monte Carlo simulation takes approximately $15.2$ seconds and $16.3$ seconds to price these derivatives.

The accuracy of JRMQ volatility corridor swap prices can be improved without using additional time steps. An increase in the accuracy of the approximation to the integral \eqref{Eq: Volatility Corridor Swap Payoff} is achieved by interpolating both the asset price and the volatility over each interval, see Appendix \ref{Sec: Volatility Corridor Swaps}. The improved accuracy of this interpolated JRMQ price is displayed in the right graph of Figure \ref{Fig: Plot_tian_SABR2}. 

\section{Conclusion}
\label{Sec: Conclusion}

In this work, we present a Joint Recursive Marginal Quantization algorithm for stochastic volatility models that provides a significant computational advantage over the most recent developments in this area.

The central idea is to margin over, and effectively \emph{undo}, the Cholesky decomposition in the two-dimensional Euler scheme when performing the quantization. We show how the joint probabilities can be computed exactly and using a computationally efficient approximation.

A concise matrix formulation was provided for efficient implementation. The robustness of the algorithm was demonstrated by pricing options with path dependencies, early exercise boundaries and exotic features. Parameter sets that would be appropriate to interest rate and equity environments were used to demonstrate the correct handling of the boundary behaviour.

JRMQ was shown to be accurate and fast when compared to traditional Monte Carlo methods. This will allow the calibration of large derivative books, as per \cite{Callegaroetal2015a}, to be extended from only considering local volatility models to the more flexible stochastic volatility models, while retaining the efficiency of the original recursive marginal quantization algorithm. 

\clearpage

\bibliographystyle{abbrvnat}
\bibliography{RMQ_References}

\newpage
\appendix

\titleformat{\section}{\normalfont\Large\bfseries}{Appendix~\thesection: }{0em}{}

\section{Volatility Corridor Swaps}
\label{Sec: Volatility Corridor Swaps}
\input{Figures/Fig_interpolated.TpX}
Consider
\begin{align}
\frac{1}{T}\int_0^T X_z \ind{L < S_z < H}\, dz  &= \frac{1}{T}\sum_{k=0}^{K-1}\int_{t_k}^{t_{k+1}} X_z \ind{L < S_z < H}\, dz  \notag \\
&\approx \frac{1}{T}\sum_{k=0}^{K-1}\int_{t^*(S_{t_k})}^{t^*(S_{t_{k+1}})} \frac{X_{t_{k+1}} - X_{t_k}}{\Delt}(z - t_k) + X_{t_k}\, dz, \label{Eq: Integral Approx}
\end{align}
where the volatility process $X_t$ has been replaced by a linear interpolation on the interval $t\in[t_k, t_{k+1}]$ with
\begin{align*}
t^*(s) &= \begin{cases}
t_k& \text{if $L\leq s \leq H$ and $s = S_{t_k}$,}\\
t_{k+1}& \text{if $L\leq s \leq H$ and $s = S_{t_{k+1}}$,}\\
\frac{H - S_{t_k}}{S_{t_{k+1}} - S_{t_k}}\Delt + t_k& \text{if $s > H$,}\\
\frac{L - S_{t_k}}{S_{t_{k+1}} - S_{t_k}}\Delt + t_k& \text{if $s < L$,}
\end{cases}
\end{align*}
providing the intercepts of the line connecting $S_{t_k}$ and $S_{t_{k+1}}$ with the corridor. This interpolation is illustrated in Figure \ref{Fig: Interpolation} and accounts for the indicator function by constraining the integration to where the asset price is in the corridor. Explicitly computing a single term from \eqref{Eq: Integral Approx} gives
\begin{multline*}
G(t_k, t_{k+1}, X_{t_k}, S_{t_k}, X_{t_{k+1}}, S_{t_{k+1}}) := \\
\frac{X_{t_{k+1}} - X_{t_k}}{2\Delt}\left[(t^*(S_{t_{k+1}}) - t_k)^2 - (t^*(S_{t_k}) - t_k)^2\right]
+ X_{t_k}\left[t^*(S_{t_{k+1}}) - t^*(S_{t_k})\right].
\end{multline*}
The value of a volatility corridor swap can now be computed as the expectation under the risk-neutral measure approximated using the quantization grids for $X$ and $S$,
\begin{align*}
\E{\frac{1}{T}\int_0^T X_z \ind{L < S_z < H}\, dz} \approx & \frac{1}{T}\sum_{k=0}^{K-1}\sum_{i=1}^{N^x}\sum_{j=1}^{N^x}\sum_{u=1}^{N^y}\sum_{v=1}^{N^y}G(t_k, t_{k+1}, x^i_k, s^u_k,  x^j_{k+1}, s^v_{k+1}) \notag \\
&\times \P(\Xq_k = x^i_k, \widehat{S}_k = s^u_k, \Xq_{k+1} = x^j_{k+1}, \widehat{S}_{k+1} = s^v_{k+1}),
\end{align*}
with the probability
\begin{multline*}
\P(\Xq_k = x^i_k, \widehat{S}_k = s^u_k, \Xq_{k+1} = x^j_{k+1}, \widehat{S}_{k+1} = s^v_{k+1}) = \\
\P(\widehat{S}_{k+1} = s^v_{k+1}| \Xq_k = x^i_k, \widehat{S}_k = s^u_k, \Xq_{k+1} =x^j_{k+1}) \P(\Xq_k = x^i_k, \widehat{S}_k = s^u_k, \Xq_{k+1} =x^j_{k+1}),
\end{multline*}
computed as part of the matrix formulation in Section \ref{Sec: Implementation}.

\end{document}